\def\v #1{\vert #1\vert}             
\def\m #1 #2{(-1)^{{\v #1} {\v #2}}} 
\DeclareMathOperator{\Ima}{Im}
\theoremstyle{plain}
\newtheorem{theorem}{Theorem}
\newtheorem{lemma}[theorem]{Lemma}
\newtheorem{example}{Example}
\theoremstyle{definition}
\def\<#1>{\langle#1\rangle}
\begin{document}

\centerline{\Large \bf A Hamilton--Jacobi theory}
\vskip 0.2cm
\centerline{\Large \bf for implicit differential systems}
\vskip 0.5cm
\vskip 0.5cm

\centerline{O\u{g}ul Esen$^{\dagger}$, Manuel de Le\'on$^{\ddagger}$, Cristina
Sard\'on$^{*}$}
\vskip 0.5cm

\centerline{Department of Mathematics$^{\dagger}$}
\centerline{Gebze Technical University}
\centerline{41400 Gebze, Kocaeli, Turkey.}
\vskip 0.5cm

\centerline{Consejo Superior de Investigaciones Cient\'ificas$^{\ddagger}$}
\centerline{C/ Nicol\'as Cabrera, 13--15, 28049, Madrid. SPAIN}
\vskip 0.5cm

\centerline{Instituto de Ciencias Matem\'aticas, Campus Cantoblanco$^{*}$}
\centerline{Consejo Superior de Investigaciones Cient\'ificas}

\begin{abstract}
  In this paper, we propose a geometric Hamilton-Jacobi theory for systems of implicit differential equations.
 
 In particular, we are interested in implicit Hamiltonian systems, described in terms of Lagrangian submanifolds of $TT^*Q$ generated by Morse families.
 The implicit character implies the nonexistence of a Hamiltonian function
 describing the dynamics. This fact is here amended
 by a generating family of Morse functions which plays the role of a Hamiltonian. A Hamilton--Jacobi equation is obtained with the aid of this generating family of functions.
 To conclude, we apply our results to singular Lagrangians by employing the construction of special symplectic structures.

\end{abstract}

\section{Introduction}

Implicit differential equations (IDE) do not only arise in purely mathematical frameworks, as in relation with
minimizers of integrals in the calculus of variations, or as
intermediate steps for the integration of differential algebraic equations \cite{Hoef}, they do appear in many various areas in science as well. Their
applications are important in relativity, control theory, chemistry, etc.
For example, they describe exchanges of matter, energy, or information that vary in space and time.

Unfortunately, differential equations and, particularly, IDE cannot always be solved analytically. For this matter, different mathematical methods
have been precisely designed. It is desirable that these differential equations are reducible to quadratures, and many attempts have been tried
through Fuchsian, Lie's theory, etc. Nowadays, the methods of numerically integration has been increasingly developed.

The Hamilton--Jacobi theory (HJ theory) has been long known as a powerful problem solving tool \cite{Ar89,AbMa78}. It is particulary useful for identifying
conserved quantities for a mechanical system, which may be possible even when the mechanical problem itself cannot be solved completely.
Therefore, it constitutes an alternative way of finding solutions of Hamilton's equations. It is equivalent
to other classical formulations of mechanics and it roots in variational calculus. The action functions are solutions
of Hamilton--Jacobi equation (HJE). It is important to remark that the classical HJ theory only deals with explicit Hamiltonian systems, but, in the literature there exist tons of physical models governed by IDE. Hence, the necessity of constructing a Hamilton--Jacobi theory for implicit systems.

For example, recall the number of theories described by singular Lagrangians in the sense of Dirac-Bergmann \cite{AnderBerg,Be56,Dirac2}, including
systems appearing in gauge theories \cite{leonrodrigues}. The  Euler--Lagrange equations (EL) give rise to differential equations that are implicit, and because of the degeneracy of the Lagrangian they cannot be put in a normal form.
Some authors have introduced a geometric formalism for dealing with dynamical systems in their implicit form \cite{tulc2,tulc1}
and a unified approach for the Lagrangian description of (time-independent) constrained mechanical systems
is provided through a technique that generates IDE on $T^{*}Q$ from one-forms
defined on the total space of any fiber bundle over $TQ$ \cite{BaGrasMende}. Other authors have designed algorithms following the Driac-Bergmann prescription, to be able to deal with singular Hamiltonian and Lagrangian theories, see for example, the geometric Gotay--Nester algorithm \cite{gotay2,gotay5,gotay3,gotay0} (see brief description in our Appendix).
In the local coordinate formalism, the classical EL displaying conservative
or nonconservative force fields or subject to linear or nonlinear nonholonomic constraints, also arise in implicit form from
d'Alambert's principle of virtual work. In the geometric formalism, the corresponding equations should then be expected to arise
in implicit form equivalently from a suitable expression of the above principle.


Our main aim is to generalize the geometric Hamilton Jacobi explained for explicit systems to the realm of implicit systems on $T^*Q$.
We interpret IDE in terms of arbitrary submanifolds of a higher-order tangent bundle, particularly, Lagrangian submanifolds of $TT^*Q$ in the case of implicit Hamiltonian systems (IHS), not necessarily in the horizontal form. As an application, we shall concentrate the problem of Hamilton Jacobi theory for singular Lagrangian theories.  

Let us summarize the problem in more technical terms. We consider a first-order IDE as a submanifold $E$ of $TT^{*}Q$. We project $E$ to $TQ$ by the tangent mapping $T{\pi_Q}$ to a submanifold
$T{\pi_Q}(E)$ of $TQ$, which is another IDE on $Q$. The philosophy of the geometric Hamilton--Jacobi theory is to retrieve solutions
of $E$, provided the solutions of $T{\pi_Q}(E)$. In similar fashion as the
classical Hamilton-Jacobi theorem, in order to lift the solutions in $Q$
to $T^*Q$, we are still in need of a closed one-form $\gamma$ on $Q$, but two ingredients of the theory are missing. One is that the base manifold $C=\tau_{T^*Q}(E)$ is not necessarily the whole $T^*Q$, but possibly a proper submanifold of it. The second is the nonexistence of a Hamiltonian vector field due to the implicit character of the equations. In the classical theory, the major role of the Hamiltonian vector field is to connect the image space of $\gamma$ and the submanifold $E$. To overcome these two difficulties, we need to introduce a auxiliary section $\sigma$ of the fibration $\tau_{T^*Q}$ defined on $C\cap \Ima{\gamma}$ and taking values in $E$ .
If, particularly, the dynamics $E$ is a Lagrangian submanifold then, according to generalized Poincar\'{e} theorem \cite{BeTu80,Ja00,LiMa,TuUr99,We77}, there exists a Morse family (a family of generating functions) defined on the total space of a smooth bundle linked to $TT^*Q$ by means of a special symplectic structure. A Morse family also establishes a link from the base space $T^*Q$ to $E$, so that for this particular case, there is no need for an auxiliary section.

The plan of the manuscript is the following: in section 2 we review the fundamentals of Hamiltonian mechanics, Section 3
develops a geometric interpretation of dynamics as Lagrangian submanifolds and their generationg with the aid of Morse families of functions.
Section 4 illustrates the geometric HJ theory both for IDE and IHS. In section 5, we contemplate the construction of complete solutions. Section 6 concerns applications of our constructed theory to the case of degenerate Lagrangians.

We assume that functional analytic issues related with the present discussion are satisfied in order to highlight the main aspects of our theory. Accordingly, we assume that all manifolds are connected, all mathematical objects are real, smooth and globally defined.

\subsection{Notation chart}

Let $Q$ be the configuration space, $TQ$ is the tangent bundle, and $T^{*}Q$ is the cotangent  bundle. Consider the tangent and cotangent bundles of $TQ$ and $T^{*}Q$, these are the possibilities: $TTQ, T^{*}TQ, TT^{*}Q$ and $T^{*}T^{*}Q$.
 Here we can establish the canonical projections for the first order tangent and cotangent bundles, denoted as $\pi_Q:T^{*}Q\rightarrow Q$ and ${\tau_Q}:TQ\rightarrow Q$.
 Furthermore, consider the projections, $\pi_{T^{*}Q}:T^{*}T^{*}Q\rightarrow T^{*}Q$, $\tau_{T^{*}Q}:TT^{*}Q\rightarrow T^{*}Q$,
 $T{\pi_Q}:TT^{*}Q\rightarrow TQ$ and the two last projections $\pi_{TQ}:T^{*}TQ\rightarrow TQ$ and $\tau_{TQ}:TTQ\rightarrow TQ$. For the last case there is another
possibility $T{\tau_Q}:TTQ\rightarrow TQ$, and both possibilities are related through a diffeomorphism we shall devise in the following lines.

{\begin{center}
\begin{table}[H]{\footnotesize
  \noindent
\caption{{\small {\bf Canonical coordinates and symplectic forms on second-order tangent and cotangent spaces.} Consider $Q$ a mechanical
configuration manifold and note that we are assuming summation over repeated indices.}}
\label{table2}
\medskip
\noindent\hfill
\resizebox{\textwidth}{!}{\begin{minipage}{\textwidth}
\centering
\begin{tabular}{ l l l }

\hline
 &   \\[-1.5ex]
  Space& Coordinates & Symplectic forms \\[+1.0ex]
\hline
 &   \\[-1.5ex]
$Q$ &   $q^i$ & \\[2.0ex]
$TQ$  & $(q^i,\dot{q}^i)$  & \\[2.0ex]
$T^{*}Q$ & $(q^i,p_i)$ &  $\omega_Q= dq^i\wedge dp_i$\\[2.0ex]
$TT^{*}Q$ & $(q^i,p_i,\dot{q}^i,\dot{p}_i)$ &  $\omega_Q^{T}= d\dot{q}^i\wedge dp_i+dq^i\wedge d\dot{p}_i$\\[2.0ex]
$TTQ$ & $(q^i,\dot{q}^i,\delta q^i, \delta \dot{q}^i)$ & \\[2.0ex]
$T^{*}T^{*}Q$ & $(q^i,p_i,\alpha_i,\beta^i)$ & $\omega_{T^{*}Q}= dq^i\wedge d\alpha_i+dp_i\wedge d\beta^i$\\[2.0ex]
$T^{*}TQ$ & $(q^i,\dot{q}^i,a_i,b_i)$ & $\omega_{TQ}=dq^i\wedge da_i+d\dot{q}_i\wedge db_i$ \\[2.0ex]
\hline\\


\end{tabular}
  \end{minipage}}
\hfill}
\end{table}
\end{center}}

Let us recall the definition of the pullback bundle, as we refer to it in forthcoming sections. Let $(P,\pi,M)$ be a fiber bundle
and assume the existence of a differential mapping $\varphi$ from a manifold $P$ to the base manifold $M$. Define the following product manifold
$$ \varphi^*P=\{(n,p)\in N\times P : \varphi(n)=\pi(p)\}$$
and the surjective submersion $\varphi^*\pi$ which simply projects an element in $\varphi^*P$ to its first factor in $N$.
The triple $(\varphi^*P,\varphi^*\pi,N)$ is called the pullback bundle of $(P,\pi,M)$ via the mapping $\varphi$.
This definition can be summarized within the following commutative diagram.
\begin{equation}\label{pbb}
\xymatrix{
\varphi^*P \ar[rr] ^{\varepsilon} \ar[dd]^{\varphi^*\pi}&& P \ar[dd]^{\pi}\\ \\
N \ar [rr]_{\varphi}&& M
}
\end{equation}
Here, $\varepsilon$ is the projection which maps an element in $\varphi^*P$ to its second factor $P$.
Let us point out two particular cases which are important for the present work. 
If $\varphi$ is a diffeomorphism,  $\varphi^*P$  and $P$ become diffeomorphic.
If $M$ is an embedded submanifold of $P$, and $N$ is an embedded submanifold of $M$ in (\ref{pbb}), then it is evident that $\varphi^*P$ is an embedded submanifold of $P$. 
In this case, the map $\varepsilon$ plays the role of an embedding.

Henceforth, we refer to a general, arbitrary manifolds by $M$ or $N$, we will denote fiber bundles by $\pi:P\rightarrow N$, where $P$ is the complete space and $N$ is its projection by $\pi$ (also $M$ instead of $N$ indistinctly).
IDE will be denoted by $E$ generally, and any general Lagrangian submanifold is denoted by $S$. We will also refer by $E$ to IHS that are Lagrangian submanifolds generated by a Morse families. By $F$ we denote Morse families. In general, $P$ is the total space of a fiber bundle related with a Morse family $F$ and $R$ is the total space of a special symplectic structure.

\section{Fundamentals}

This section is dedicated for reviewing fundamentals of Hamiltonian mechanics from a geometric viewpoint, and basics of the Tulczyjew's triple. Here, we are setting
the notation we shall be using along the paper.

\subsection{Geometry of the cotangent bundle}

Consider a manifold $Q$ and a cotangent bundle $T^{*}Q$ with canonical projection $\pi_Q:T^{*}Q\rightarrow Q$. We denote by $(q^i,p_i)$
the fibered coordinates in $T^{*}Q$ such that $\pi_Q(q^i,p_i)=(q^i)$. A cotangent bundle is equipped with a canonical one-form $\theta_Q$ defined as follows:
\begin{equation}
 \langle X_{\alpha_q}, \theta_Q(\alpha_q)\rangle=\langle T\pi_Q(X_{\alpha_q}),\alpha_q\rangle,
\end{equation}
where $X_{\alpha_q}\in T_{\alpha_q}(T^{*}Q)$ and $\alpha_q\in T^{*}_qQ$.
In fibered coordinates, the canonical one-form reads $\theta_Q=p_idq^i$, which is known as the Liouville one-form on $T^{*}Q$.
Consider now the two-form $\omega_Q=-d\theta_Q$, namely, $\omega_Q=dq^i\wedge dp_i$. This two-form has the two properties
\begin{enumerate}
 \item It has maximal rank $2n$, where $n$ is the dimension of $Q$.
 \item $d\omega_Q=0$
\end{enumerate}
This two-form is called a symplectic two-form.
More generally, a symplectic manifold is a pair $(M,\omega)$ such that $\omega$ has maximal rank and $d\omega=0$.
Therefore, $(T^{*}Q,\omega_Q)$ is a symplectic manifold and $\omega_Q$ is the canonical symplectic form on $T^{*}Q$.

Given two symplectic manifolds $(M_1,\omega_1)$ and $(M_2,\omega_2)$ and a map $F:M_1\rightarrow M_2$, we say that $F$
is a symplectomorphism if $F^{*}\omega_2=\omega_1$.

\subsection{Hamiltonian dynamics on the cotangent bundle}

A Hamiltonian system on $T^*Q$ is determined by the triple $\left( T^{\ast }Q,\omega _{Q},H\right) $, where $H$ being the Hamiltonian function.
Geometrically, Hamilton's equations are defined by
\begin{equation}\label{geomHeq}
 \iota_{X_H}\omega_Q=dH,
\end{equation}
where $X_H$ is the Hamiltonian vector field associated with the Hamiltonian function $H$, and $\iota_{X_H}$ is the inner contraction operator. 

In Darboux's coordinates $(q^i,p_i)$ on $T^*Q$, with $i=1,\dots,n$ for an $n$-dimensional configuration manifold $Q$. The canonical one-form reads $\theta_Q=p_idq^i$ and the symplectic two-form turns out to be $\omega_Q=dq^i\wedge dp_i$. In this local picture, the Hamiltonian vector field $X_H$ is written as
\begin{equation} \label{HamEq}
 X_H=\frac{\partial H}{\partial p_i}\frac{\partial}{\partial q^i}-\frac{\partial H}{\partial q^i}\frac{\partial}{\partial p_i}
\end{equation}
whereas the Hamilton's equations (\ref{geomHeq}) turn out to be
\begin{equation}\label{hamileq22}
{\dot q}^i=\frac{\partial H}{\partial p_i},\qquad {\dot p}_i=-\frac{\partial H}{\partial q^i}.
 \end{equation}

Note that, the Hamilton's equations are explicit by definition. Hence, an IDE cannot be recast
in the classical Hamiltonian formalism presented in (\ref{hamileq22}). To deal with IHS, we shall redefine Hamiltonian systems in a more abstract framework, as we shall present in the forthcoming sections.

\subsection{Geometry of the tangent bundle}

Consider the manifold $Q$ and its tangent bundle $TQ$ together with its tangent bundle canonical projection $\tau_Q:TQ\rightarrow Q$.
We consider fibered coordinates $(q^i,\dot{q}^i)$ such that $\tau_Q(q^i,\dot{q}^i)=(q^i)$.
Given a function $f:Q\rightarrow \mathbb{R}$, we define its complete lift $f^{T}$ to $TQ$ as the function:
\begin{equation}
 f^{T}(v_q)=df(q)(v_q)\in \mathbb{R}.
\end{equation}
In local fibered coordinates,
\begin{equation}
 f^{T}(q^i,\dot{q}^i)=\dot{q}^i\frac{\partial f}{\partial q^i}.
\end{equation}

Given a tangent vector $v_q\in T_qQ$
with components $(q^i,v^i)$, we define its vertical lift $v_q^{V}$ for a point $w_q\in T_qQ$ by
\begin{equation}
 v_q^{V}=\frac{d}{dt}|_{t=0} \left(w_q+tv_q\right)
\end{equation}
If $X$ is vector field on $Q$, then its vertical lift to $TQ$ is the vector field
\begin{equation}
 X^{V}(w_q)=(X(q))_{w_q}^V\in T_{w_q}(TQ)
\end{equation}
for all $w_q$ in $TQ$.
Now, consider the flow $\phi_t:Q\rightarrow Q$ of a vector field $X$. We define the complete lift $X^{T}$ of $X$ to $TQ$ as the generator
of the tangent flow $T\phi_t:TQ\rightarrow TQ$. We are assuming, for simplicity, that the flow
generated by $X$ is complete, but the construction is still valid in general.
If $X=X^i\frac{\partial}{\partial q^i}$, a direct computation shows that
\begin{equation}
 X^{T}=X^i\frac{\partial}{\partial q^i}+\dot{q}^i\frac{\partial X^i}{\partial q^i}\frac{\partial}{\partial \dot{q}^i}.
\end{equation}
Consider now a $k$-form $\omega$ on $Q$. We define its complete lift $\omega^T$ to $TQ$ a $k$-form characterized by:
\begin{equation}
 \omega_Q(X_1^{T},\dots,X_n^{T})=\omega_Q(X_1,\dots,X_n)^{T}
\end{equation}
The following identity follows from a direct computation $d(\omega^{T})=(d\omega)^{T}$. We are particulary interested
in the case of lifts of symplectic forms on $Q$. Therefore, if $\omega_Q$ is symplectic on $Q$, then $\omega_Q^{T}$ is a symplectic form
on $TQ$. Indeed,
\begin{equation}
 \text{rank}(\omega_Q^{T})=2\ \text{rank}(\omega_Q)
\end{equation}
and $d(\omega^{T})=(d\omega)^{T}=0$.

\subsection{Submanifolds of symplectic manifolds}
Let $(M,\omega)$ be a symplectic manifold, and $N$ be a submanifold of $M$. We define the symplectic orthogonal complement
of $TN$ as the set of tangent vectors
\begin{equation}
 TN^{\bot}=\{u\in TM |\quad \omega(u,v)=0, \forall v\in TN\}.
\end{equation}
Note that, the dimension of the tangent bundle $TM$ is the sum of the tangent bundle $TN$ and its symplectic orthogonal complement $\ TN^{\bot}$. We list some of the important cases. 
 \begin{itemize}
  \item $N$ is called an isotropic submanifold of $M$ if $TN\subset TN^{\bot}$. In this case, the dimension of $N$ is less or equal to the half of the dimension of $M$.
  \item $N$ is called a coisotropic submanifold of $M$ if $TN^{\bot}\subset TN$. In this case, the dimension of $N$ is greater or equal to the half of the dimension of $M$.
  \item $N$ is called a Lagrangian submanifold of $M$ if $N$ is a maximal isotropic subspace of $(TM,\omega)$. That is, if $TN=TN^{\bot}$. In this case, the dimension of $N$ is equal to the half of the dimension of $M$. 
  \item $TN$ is symplectic if $TN\cap TN^{\bot}=0$. In this case, $(N,\omega_N)$ is a symplectic manifold.
 \end{itemize}

A diffeomorphism between two symplectic manifolds is called a symplectomorphism if it preserves the symplectic structures. Under a symplectomorphism, the image of a Lagrangian (isotropic, coisotropic, symplectic) submanifold is Lagrangian (resp. isotropic, coisotropic, symplectic) submanifold.

\subsection{The complete lift of $\omega_Q$} \label{TSS}

Consider the canonical symplectic manifold $(T^{\ast}Q,\omega_Q)$. The tangent bundle $TT^{\ast}Q$ of $T^{\ast}Q$ carries a symplectic two-form $\omega_{Q}^T$ that derives from two potential one-forms, denoted by $\theta_1=i_{T}\omega _{Q}$ and $\theta_2=d_{T}\theta _{Q}$. These one-forms are defined by the canonical forms $\omega_{Q}$ and $\theta_{Q}$, respectively. The definition of the derivation $i_{T}$ is the manifestation of the double vector bundle structure of $TTQ$ on $TQ$, and explicitly given by
\begin{equation*}
  i_{T}\omega _{Q}(X)=\omega _{Q}(\tau_{T^{\ast}Q}(X),T\tau_Q (X)).
\end{equation*}
The derivation $d_{T}$ is the commutator $[d,i_{T}]$.

Accordingly, in the local coordinate chart $(q^i,p_i;\dot{q}^i,\dot{p}_i)$, the potential one-forms are computed to be
\begin{equation}
\theta_{1}=i_{T}\omega _{Q}=\dot{p}_i dq^i-\dot{q}^i dp_i, \quad \theta_{2}=d_{T}\theta _{Q}=\dot{p}_i dq^i+p_i d\dot{q}^i  \label{thets}.
\end{equation}
The exterior derivatives of these one-forms are the same and define the symplectic two-form
\begin{equation} \label{Tstf}
\omega^T_Q=d\dot{p}_i \wedge dq^i+dp_i\wedge d\dot{q}^i
\end{equation}
is known as the complete lift to the tangent space of the symplectic two-form \cite{Tu77,Tu89}. Note that, the difference $\theta_{2}-\theta_{1}$ is an exact one-form. Actually, it is the exterior derivative of coupling function of the Legendre transformation between the Lagrangian and Hamiltonian formalisms. Further, existence of two potential one-forms for $\omega_Q^T$ leads to the existence of a Tulczyjew's triple, which is exhibited in the following subsection.

\subsection{Tulczyjew's triple}

We now consider a particular kind of symplectic manifolds introduced by Tulczyjew in \cite{Tu76,Tu77,Tu80,TuUr96,TuUr99,Tu89}. 

\medskip

A special symplectic manifold is a quintuple $(R,N,\tau,\theta,A)$ where $\tau:R\rightarrow N$ is a fiber bundle, $\theta$ is a one-form
on $R$ and $A:R\rightarrow T^{*}N$ is a diffemorphism such that $\pi=\pi_N\circ A$ and $\theta=A^{*}\theta_N$. Since $(T^{*}N,\omega_N=-d\theta_N)$ is a symplectic manifold, then $(R,\omega=-d\theta)$ is symplectic too and $A^{*}\omega_N=\omega$,
therefore, $(R,\omega)$ and $(T^{*}N, \omega_N)$ are symplectomorphic. Consider the following diagram.
\begin{equation} \label{sss}
\xymatrix{R \ar[dr]_{\tau} \ar[rr]^{A}&&T^{\ast }N \ar[dl]^{\pi_{N}}
\\&N}  
\end{equation}
Tulczyjew's symplectic space $(TT^*Q,\omega_Q^T)$ admits two special symplectic structures. Let us study them.

The non-degeneracy of the canonical symplectic structure $\omega_Q$ on $T^*Q$ leads to the existence of the following diffeomorphism
\begin{equation}\label{beta1}
 \beta_Q:TT^{*}Q \mapsto T^{*}T^{*}Q:X \mapsto \iota_X\omega_Q.
 \end{equation} 
The mapping $\beta_Q$ is actually a symplectomorphism if the iterated cotangent bundle $T^{*}T^{*}Q$ is equipped with the canonical symplectic two-from $\omega_{T^*Q}$.
In coordinates, we have that
 \begin{equation}\label{beta2}
  \beta_Q(q^i,p_i,\dot{q}^i,\dot{p}_i)=(q^i,p_i,-\dot{p}_i,\dot{q}^i).
 \end{equation}
It is a matter of a direct calculation to prove that, the quintuple $$(TT^{*}Q,T^{*}Q,\tau_{T^{*}Q},\theta_1,\beta_Q)$$ is a special symplectic manifold. Here, $\theta_1$ is the differential one-form defined in (\ref{thets}). 

We define a canonical diffeomorphism $\alpha_Q:TT^{*}Q\rightarrow T^{*}TQ$ as follows. First of all,
let us recall that there exists a canonical involution $S_Q:TTQ\rightarrow TTQ$ locally given by
\begin{equation}
 S_Q(q^i,\dot{q}^i,\delta q^i, \delta \dot{q}^i)=(q^i,\delta q^i, \dot{q}^i, \delta \dot{q}^i)
\end{equation}
(see \cite{godbillion}). Now, given $v\in TT^{*}Q$ we define $\alpha_Q(v)\in T^{*}TQ$ as
\begin{equation}
 \langle w,\alpha_Q(v)\rangle=\frac{d}{dt}\langle \gamma, \xi \rangle|_{t=0}
\end{equation}
where $\gamma:\mathbb{R}\rightarrow TQ$ and $\xi:\mathbb{R}\rightarrow T^{*}Q$ are curves such that
$j^{1}\circ \gamma=S_Q(w)$, $j^{1}\circ \xi=v$ and $\tau_Q\circ \gamma=\pi_Q\circ \xi$. In local coordinates, we obtain
\begin{equation}\label{defalpha}
 \alpha_Q(q^i,p_i,\dot{q}^i,\dot{p}_i)=(q^i,\dot{q}^i, \dot{p}^i, p_i).
\end{equation}
The mapping $\alpha_Q$ is a symplectomorphism if the iterated cotangent bundle $T^{*}TQ$ is equipped with the canonical symplectic two-from $\omega_{TQ}$.
Then it becomes immediate to prove that  $$(TT^{*}Q,TQ,T\pi_Q,\theta_2,\alpha_Q)$$ is a special symplectic manifold. Here, $\theta_2$ is the differential one-form defined in (\ref{thets}).

As a result, we have derived two special symplectic structures for the symplectic manifold $(TT^*Q,\omega_Q^T)$. Tulczyjew's triple is the combination of these two special symplectic structures in one commutative diagram as given below.
\begin{equation}
\xymatrix{T^{\ast }TQ \ar[dr]^{\pi_{TQ}}&&TT^{\ast }Q\ar[dl]^{T\pi_{Q}}
\ar[rr]^{\beta_{Q}} \ar@<1ex>[dr]^{\tau_{T^{\ast }Q}}
\ar[ll]_{\alpha_{Q}}&&T^{\ast }T^{\ast }Q\ar[dl]^{\pi _{T^{\ast}Q}}
\\&TQ&&T^{\ast}Q } \label{T}
\end{equation}

\section{Lagrangian Submanifolds}
In this and following subsections, we will summarize the theory of Morse families and Lagrangian submanifolds generated by them. We refer an incomplete list \cite{Be11, LiMa, Tu80,
TuUr96, TuUr99, We77} for more detailed discussions.

 Let $(M,\omega)$ a symplectic manifold. A sufficient condition for a submanifold $N\subset M$ to be Lagrangian is $TN=TN^{\bot}$.
 If $N$ is an isotropic subspace of a symplectic manifold $(M,\omega)$, then $N$ is Lagrangian if an only if $\text{dim}N=\text{dim}M/2$. For different types of manifolds (Poisson, Nambu--Poisson, etc), the definition of a Lagrangian submanifold has been accommodated to its background. See for example \cite{leonsardon1,leonsardon2,LiMa}.

Two principal examples of Lagrangian submanifolds of the symplectic phase space $T^{*}Q$ are the fibers
of the canonical cotangent bundle projection and the image space of closed one-forms $\gamma:Q\rightarrow T^{*}Q$.
The latter case includes the zero section of the projection $T^*Q\mapsto Q$ as well. This statement is the well-known
Weinstein tubular neighborhood theorem \cite{We77}.

\subsection{Morse families}

Consider a differentiable fibration $(P,\pi,N)$. A real valued function $F$ on the total space $P$
can intuitively be understood as a family of functions parameterized by the coordinates of the fibers of $\pi$. The critical set of $F$ is defined by
\begin{equation}
Cr\left( F,\pi \right) = \{ z \in P:\left\langle
dF(z) ,V\right\rangle =0,\forall V \in V_zP \}
\end{equation}%
and is a submanifold of $P$. The dimension of $Cr\left(F,\pi \right)$ is equal to the dimension of $N$.
Here, $VP$ is the vertical bundle on $P$ consisting of vertical vectors projecting to the zero section of $TN$
under the mapping $T\pi$. We define a bilinear mapping
\begin{eqnarray} \label{W}
W\left( F,z\right) &:&V_z P\times
T_z P\rightarrow
\mathbb{R}
\notag \\
&:&\left( v,w \right) \rightarrow
D^{\left( 1,1\right) }\left( F\circ \chi \right) \left( 0,0\right) ,
\end{eqnarray}%
where $\chi :%
\mathbb{R}
^{2}\rightarrow P$ is the mapping such that the vector $v$ is obtained by taking the derivative of $\chi$ with respect to its first entry at $(0,0)$ and the vector $w$ is obtained by taking the derivative of $\chi$ with respect to its second entry at $(0,0)$.

A family of functions $F$ defined on the total space of the fibration $(P,\pi,N)$ is said to be regular if the rank of the matrix $W\left( F,z
\right) $ defined in (\ref{W}) is the same at each $z\in Cr\left( F,\pi \right)
.$ A family of functions $F$ is said to be a Morse family (or an energy function) if the rank of $W\left( F,z
\right) $ is maximal at each $z\in Cr\left( F,\pi \right)$.

Let us write the requirement of being a Morse family in terms of local coordinates. Assume that the dimension of the manifold $N$ is $n$ with coordinates $(q^i)$, the dimension of a fiber is $k$ with coordinates $(\lambda^a)$. The function $F$ is called a Morse family if the rank of the $n\times (n+k)$-matrix%
\begin{equation}
\left( \frac{\partial ^{2}F}{\partial q^{i}\partial q^{j}}\text{ \ \ }\frac{%
\partial ^{2}F}{\partial q^{i}\partial \lambda^{a}}\right)  \label{MorseCon}
\end{equation}%
is maximal.

\subsection{Lagrangian submanifolds generated by Morse families} \label{LS-MF}

A Morse family $F$ on the smooth bundle $\left( P,\pi ,N\right) $
generates a Lagrangian submanifold
\begin{equation}
S_{T^{\ast }N}=\left\{ w \in T^{\ast }N:T^{\ast }\pi (w
)=dF\left( z\right) \right\}  \label{LagSub}
\end{equation}%
of $\left(
T^{\ast }N,\omega _{N}\right)$. In this case, we say that $S_{T^{\ast }N}$ is generated by the Morse family $F$. Note that, in the definition of $%
S_{T^{\ast }N}$, there is an intrinsic requirement that $\pi \left( z\right)
=\pi _{T^{\ast }N}\left( w \right) $. Here, we are presenting the following diagram in order to summarize the discussion. 
 \begin{equation} \label{Morse-pre}
  \xymatrix{
\mathbb{R}& P \ar[d]^{\pi}\ar[l]^{F}& T^*N \ar[d]^{\pi_N}\\ &
N  \ar@{=}[r]& N
}
\end{equation}
In order to exhibit the structure of the submanifold $S_{T^{\ast }N}$, define a fiber preserving mapping $\kappa $ from the critical set $Cr\left( F,\pi \right)$ of the Morse family $F$ to the cotangent bundle $T^{\ast
}N$ according to the requirement
\begin{equation}
\left\langle \kappa (z) , Z_{N}\right\rangle =\left\langle
dF,Z_{P}\right\rangle ,
\end{equation}%
which is valid for all $\pi$-related vector fields $Z_{N}\in \mathfrak{X}(N)$ and $Z_{P}\in \mathfrak{X}(P)$. Note that, this mapping is an immersion and that $\dim \left( S_{T^{*}N}\right)$ equals to $\dim \left( Cr\left( F,\pi \right) \right)=n$. A direct calculation shows that the image space of $\kappa$ is the Lagrangian submanifold $S_{T^{\ast }N}$ defined in (\ref{LagSub}). 

Let $N$ be an immersed submanifold of $Q$, and $T_{N}^{\ast }Q$ denote the inverse image $\pi
_{Q}^{-1}\left( N\right)$ in the cotangent bundle $T^{\ast }Q$. We define the
mapping
\begin{equation}
\xi :T_{N}^{\ast }Q\rightarrow T^{\ast }N
\end{equation}%
by requiring that the equality $$\left\langle \xi \left( p\right) ,Z_N(n)\right\rangle =\left\langle
p,Z_N(n)\right\rangle $$ holds for each $Z_N\in \mathfrak{X}(N)$, where $n=\pi_Q(p)$ and $p\in T_{N}^{\ast }Q$. So, $\xi$ is the identity
if $p\in N,$ that is, if $p\in T^{\ast
}N.$ Consider the canonical injection $i:T_{N}^{\ast
}Q\rightarrow T^{\ast }Q$. If $S_{T^\ast N}$ is a Lagrangian submanifold of $T^{\ast }N$ then the preimage $i\circ \xi ^{-1} (S_{T^*N})$ is a Lagrangian submanifold $S_{T^*Q}$ of $T^{\ast }Q$. If, particularly, the Lagrangian submanifold $S_{T^\ast N}$ is a generated by a Morse family $F$ on the fiber bundle $(P,\pi,N)$ then, using the same terminology, we say that $S_{T^*Q}$ is generated by the Morse family $F$. We are presenting the following diagram.
 \begin{equation} \label{Morse}
  \xymatrix{
\mathbb{R}& P \ar[d]^{\pi}\ar[l]^{F}& T^*Q \ar[d]^{\pi_Q}\\ &
N  \ar@{^{(}->}[r]& Q
}
\end{equation}
Let us try to see this more explicitly in the following way. For every point $p\in T^{\ast }Q$ such that $\pi _{N}\left( p\right)\in N$, we can find a point $z\in P$ satisfying $\pi \left( z\right) =\pi _{N}\left( p\right) $.
Then, for every vector $Z_{P}(z)$, we have that $T\pi \circ
Z_{P}$ is a vector field on $N$, and hence a vector field on $Q$. Then, the elements of the Lagrangian submanifold $S_{T^*Q}$ are defined by the requirement $$\left\langle
p ,T\pi \circ
Z_{P}(z)\right\rangle =\left\langle dF(z),Z_{P}(z)\right\rangle,$$
where $F$ is a Morse family on $P$.

If the local coordinates $(q^i,\lambda^a)$ are considered, then the Lagrangian submanifold generated by the Morse family $F$ is defined by 
\begin{equation}
S_{T^*Q}=\left \{ \left(q^i,\frac{\partial F}{ \partial q^i} (q,\lambda)\right)\in T^*Q: \frac{\partial F}{ \partial \lambda^a} (q,\lambda)=0  \right \}.
\end{equation}
Note that, we are not distinguishing here the base manifold $N$ from $Q$.

\subsection{Lagrangian submanifolds of special symplectic structures}

Let $\left(R,N=Q,\tau ,\omega=-d\theta ,A \right) $ be a special symplectic
structure for a symplectic manifold $\left(R,\omega \right) $ with symplectomorphism $A$. If $S_{T^*Q}$ is a Lagrangian submanifold of $T^*Q$, then its pre-image $S=A^{-1}(S_{T^*Q})$ under the symplectomorphism $A$ is a Lagrangian submanifold of the symplectic manifold $(R,\omega)$. If the Lagrangian submanifold $S_{T^*Q}$ is generated by the Morse family $F$ as presented in the diagram (\ref{Morse}), then we say that $S$ is generated by the Morse family $F$ as well. To illustrate this, we draw the following diagram.
 \begin{equation} \label{Morse-Gen}
 \xymatrix{
\mathbb{R}& P \ar[d]^{\pi}\ar[l]^{F}& T^*Q \ar[d]_{\pi_Q}& R \ar[l]_{A} \ar[dl]^{\tau}\\ &
N \ar@{^{(}->}[r]& Q
}
\end{equation}

Let us record here some special cases of the diagram for future reference. 

 \begin{itemize}
\item The simplest case occurs if $A$ is the identity mapping (no special symplectic manifold) on $T^{\ast }Q$, $P=N$ (no Morse family), and the submanifold $N=Q$ (no constraints on $Q$). Then, we have a function $F$ (not a family) on $Q$ and its exterior derivative is a Lagrangian submanifold of $T^*Q$. Then, $S$ turns out to be the image space of a closed one-form $dF:Q\rightarrow T^{*}Q$.
This case includes the zero section of the projection $T^*Q\mapsto Q$ as well \cite{We77}.

\item If $A$ is the identity mapping on $T^{\ast }Q$ (no special symplectic manifold), $P=N$ (no Morse family). Then we have a function $F$ (not a family) on a submanifold of $N$ of $Q$, and the Lagrangian submanifold
\begin{equation}
 S_{T^*Q}=\{p\in T^*Q : \pi_Q(p)\in N, \langle Z,\theta_Q (p)\rangle  =  \langle T\pi(Z), dF\rangle\}
\end{equation}
for any $Z\in T_pT^*N$ such that $T\pi(Z)\in TN\subset TQ$, \cite{leonrodrigues}.

\item Let $P=N$ (no Morse family) and the submanifold $N=Q$ (no constraints on $Q$). Instead, consider the existence of a non-trivial special symplectic structure $\left( P,N,\pi ,\omega ,A \right) $. Then, a Lagrangian submanifold $S$ of $(P,\omega)$ is defined by the pre-image of $dF$ under the isomorphism $A$, that is
\begin{equation}
S=A^{-1}\left( dF\right)
=\left\{y\in Y:\left\langle y,u\right\rangle =\left\langle dE,T\tau(u)
\right\rangle ,\forall u\in T_{y}Y\right\}
\end{equation}%
where $E$ is defined on $Q$.

\item Let $P=N$ (no Morse family), $N$ be a proper submanifold of $Q$, $(R,N,\tau,\theta,A)$ be a special symplectic manifold. Then the
set
\begin{equation}
S=\left\{ y\in R: \tau \left( y\right) \in N,\left\langle \theta
,u\right\rangle =\left\langle dF,T\tau (u) \right\rangle ,\forall u\in
T_{y}Y\right\}
\end{equation}%
is a Lagrangian submanifold of $\left( R,-d\theta \right) $, and said to be
generated by the function $F:N\rightarrow
\mathbb{R}
.$ We cite chapter 7 of \cite{leonrodrigues} for a proof of this statement in a more general framework. 

\end{itemize}

\subsection{Lagrangian submanifolds of Tulczyjew's symplectic space} \label{LSTT*Q}

Assume that a submanifold $E$ of $TT^{*}Q$ is defined in terms of the constraint functions $\Phi^{A}:TT^{*}Q\rightarrow \mathbb{R}$,
  $$\Phi^A(q^i,p_i,\dot{q}^i,\dot{p}_i)=0. $$ 
If $E$ is an IHS, then it must be a Lagrangian submanifold of
the symplectic space $(TT^{*}Q,\omega_Q^{T})$ and the number of constraints must be $2n$ assuming that the dimension of $Q$ is $n$.
Note that, in this case, the Poisson brackets of the constraint functions must vanish \cite{LiMa}, 
  \begin{equation}\label{LSC}
    \{\Phi^A,\Phi^B\}=0.
  \end{equation}
Here, the Poisson bracket in (\ref{LSC}) is the one induced by the
  Tulczyjew's symplectic two-form  $\omega_Q^T$ in the
  subsection (\ref{TSS}). This Poisson bracket can be computed as 
    \begin{equation} \label{PB}
    \{f,g\}=\omega_Q^{T}(X_f,X_g)=\frac{\partial f}{\partial \dot{p}_i}\frac{\partial g}{\partial q^i}-\frac{\partial g}{\partial \dot{p}_i}\frac{\partial f}{\partial q^i}+\frac{\partial f}{\partial p_i}\frac{\partial g}{\partial \dot{q}^i}-\frac{\partial g}{\partial p_i}\frac{\partial f}{\partial \dot{q}^i}.
   \end{equation}

The image of a Hamiltonian vector field is a Lagrangian submanifold of $TT^*Q$. Conversely, if a Lagrangian submanifold of $TT^*Q$ is the image of a vector field on $T^*Q$, then this vector field is a (at least locally) Hamiltonian vector field. To see this, we present the following calculations. Assume that $E$ is a Lagrangian submanifold of $TT^*Q$ and there exists a vector field $X$ satisfying $E=\text{Im}(X)$. In Darboux's coordinates, we write $X$ as
\begin{equation}
 X=\phi^i(q,p)\frac{\partial}{\partial q^i}+\phi_i(q,p)\frac{\partial}{\partial p_i},
\end{equation}
where $\phi^i$ and $\phi_i$ are arbitrary functions on $T^*Q$.  This local picture enables us to write the fiber variables $(\dot{q},\dot{p})$ in terms of the functions of the base variables $(q,p)$, and the following definition of the Lagrangian submanifold of $E$ given by
    \begin{equation}\label{Eedef}
E=\left \{(q^i,p_j;\dot{q}^i, \dot{p}_i)\in TT^*Q: \dot{q}^i-\phi^i(q,p)=0, \dot{p}_i+\phi_i(q,p)=0\right \}. 
 \end{equation}
Since $E$ is a Lagrangian submanifold, the Poisson brackets of the defining equations in (\ref{Eedef}) must be identically zero. Here, the Poisson bracket is the one in \eqref{PB}. This requirement dictates $n^2$ number of conditions
\begin{equation*}
  \frac{\partial \phi_j}{\partial p_i}-\frac{\partial \phi^i}{\partial q^j}=0.
\end{equation*}
These conditions are the same with the conditions for closure of the one-form $\phi=\phi_j dq^j+\phi^i dp_i$.
Locally, every closed one-form is exact. This implies that there exists a Hamiltonian function $H$ depending on $(q,p)$ satisfying $dH=\phi$. As a result, the system (\ref{Eedef}) can be written as in form of the Hamilton's equations (\ref{HamEq}). 

A stronger result follows from the generalized Poincar\'{e} lemma \cite{BeTu80,Ja00,LiMa,TuUr99,We77}. The generelized Poincar\'{e} guarantees that for a Lagrangian submanifold $E$ of $TT^*Q$, whether it is explicit or implicit, there exists a Morse family of functions generating $E$. 
This theorem is also known as Maslov-H\"{o}rmander theorem \cite{BeCaSi09,Be11,Ca91,Ca15}. We record here a Morse family generating the Lagrangian submanifold $E$ as follows 
 \begin{equation} \label{Morse-Gen}
 \xymatrix{
\mathbb{R}& P \ar[d]^{\pi}\ar[l]^{F}& T^*T^*Q \ar[d]_{\pi_{T^*Q}}&TT^*Q \ar[l]_{\beta_Q} \ar[dl]^{\tau_{T^*Q}}\\ &
N \ar@{^{(}->}[r]& T^*Q
}
\end{equation}
where $N$ is a submanifold of the cotangent bundle $T^*Q$, and $F$ is a Morse family defined on the total space $P$ of the smooth bundle $(P,\pi,N)$. On the local chart, the Lagrangian submanifold $E$ generated by $F$ is computed to be 
\begin{equation} \label{MFGen}
E=\left \{ \left(q^i,p_i;\frac{\partial F}{\partial p_i},-\frac{\partial F}{\partial q^i}\right)\in TT^*Q: \frac{\partial F}{\partial \lambda^a}(q,p,\lambda)=0\right \}.
\end{equation} 
Note that, if the Morse family $F$ does not depend on the fiber variables $(\lambda)$, then $E$ becomes explicit. 

Let us comment on a particular case. Consider the following constrained Hamiltonian (Dirac) system
\begin{equation}\label{const}
 \dot{q}^i=\frac{\partial H}{\partial p_i}(q,p),\quad \dot{p}_i=-\frac{\partial H}{\partial q^i}(q,p),\quad \Phi^{\alpha}(q,p)=0,
\end{equation}
\noindent
where $\Phi^{\alpha}$, for $\alpha=1,...,k$, are real valued functions defining a constraint submanifold $M\subset T^*Q$ \cite{MaMeTu97}. Note that, we prefer to denote the generator by $H$ to highlight that it is a Hamiltonian function in the classical sense. Diagrammatically, we have the following picture generating the dynamics
\begin{equation} \label{cHs}
 \xymatrix{
\mathbb{R}& \ar[l]_{F}P \ar[d]& T^*T^*Q \ar[d]_{\pi_{T^*Q}}&TT^*Q \ar[l]_{\beta_Q} \ar[dl]^{\tau_{T^*Q}}\\ & 
T^*Q \ar@{=}[r]& T^*Q
}
\end{equation}
where $P$ is the product manifold $T^*Q\times \mathbb{R}^k$ and the Morse family is given by $F(q,p,\lambda)=H(q,p)+\lambda_\alpha\Phi^{\alpha}(q,p)$.

\section{Hamilton--Jacobi theory for implicit systems}

\subsection{Geometry of the Hamilton--Jacobi equation}

A HJE is a partial differential equation for a generating function $S(q^i,t)$ on $Q$ and the time $t$ given by
\begin{equation}\label{tdepHJ}
 \frac{\partial S}{\partial t}+H\left(q^i,\frac{\partial S}{\partial q^i}\right)=0.
\end{equation}
Note that, the generalized momenta do not appear in (\ref{tdepHJ}), except as derivatives of $S$.
This equation is a necessary condition describing the external geometry in problems of calculus of variations. Hamilton's principal function
$S=S(q^i,t)$, which is the solution of the HJE, and the classical function $H$ are both closely related to the classical action
\begin{equation*}
 S=\int{L dt}.
\end{equation*}

The function $S$  is a generating function
for a family of symplectic flows that describes the dynamics of the Hamilton equations.
 If the generating function is separable in time, then we can make an ansatz $S(q^i,t)=W(q^i)-Et,$
where $E$ is the total energy of the system. Then, HJE in \eqref{tdepHJ} reduces to
 \begin{equation}\label{hje}
H\left(q^i,\frac{\partial W}{\partial q^i}\right)=E.
 \end{equation}
Physically, this constant is identified with the energy of the mechanical system.

Let us summarize the geometric Hamilton-Jacobi theory. For this, first consider a Hamiltonian vector field $X_H$ on $T^{*}Q$, and a one-form section $\gamma$ on $Q$.
We define a vector field $X_H^{\gamma}$ on $Q$ by
\begin{equation}\label{gammarelated}
 X_H^{\gamma}=T\pi\circ X_H\circ \gamma.
\end{equation}
This definition implies the commutativity of the following diagram.
\begin{equation}\label{Xg}
  \xymatrix{ T^{*}Q
\ar[dd]^{\pi_{Q}} \ar[rrr]^{X_H}&   & &TT^{*}Q\ar[dd]^{T\pi_{Q}}\\
  &  & &\\
 Q\ar@/^2pc/[uu]^{\gamma}\ar[rrr]^{X_H^{\gamma}}&  & & TQ }
\end{equation}
We enunciate the following theorem.

\begin{theorem} \label{HJT}
The closed one-form $\gamma=dW$ on $Q$ is a solution of the Hamilton--Jacobi equation \eqref{hje} if the following conditions are satisfied:

\begin{enumerate}
\item The vector fields $X_{H}$ and $X_{H}^{\gamma }$ are $\gamma$-related, that is

\begin{equation}
T\gamma(X^{\gamma})=X\circ\gamma.
\end{equation}
\item Or equivalently, if the following equation is fulfilled
$$d\left( H\circ \gamma \right)=0.$$
\end{enumerate}
\end{theorem}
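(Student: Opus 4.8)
The plan is to establish the equivalence of the two conditions and then to identify condition~2 with the Hamilton--Jacobi equation~\eqref{hje}. The starting observation is that, because $\gamma=dW$ is closed, its image is a Lagrangian submanifold of $(T^{*}Q,\omega_Q)$; equivalently, the tautological property of the Liouville form gives $\gamma^{*}\theta_Q=\gamma$, hence $\gamma^{*}\omega_Q=-d\gamma=0$. I will also use throughout that $\gamma$ is a section, so $T\pi_Q$ restricts to an isomorphism of $T(\Ima\gamma)$ onto $TQ$; consequently $T(\Ima\gamma)$ meets the vertical subspace $\ker T\pi_Q$ only in the zero vector.

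The core computation compares the two vector fields $X_H\circ\gamma$ and $T\gamma\circ X_H^{\gamma}$ along $\gamma$. By the definition~\eqref{gammarelated} of $X_H^{\gamma}$ together with $T\pi_Q\circ T\gamma=\mathrm{id}$, both project under $T\pi_Q$ to $X_H^{\gamma}$, so their difference is vertical. I then contract $\omega_Q$ with an arbitrary vector $T\gamma(Y)$, $Y\in\mathfrak{X}(Q)$. On one hand, \eqref{geomHeq} and the chain rule give
\begin{equation*}
\omega_Q\bigl(X_H\circ\gamma,\,T\gamma(Y)\bigr)=\langle dH,T\gamma(Y)\rangle=\langle d(H\circ\gamma),Y\rangle,
\end{equation*}
since $\gamma^{*}dH=d(H\circ\gamma)$. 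On the other hand, $\gamma^{*}\omega_Q=0$ yields
\begin{equation*}
\omega_Q\bigl(T\gamma(X_H^{\gamma}),\,T\gamma(Y)\bigr)=(\gamma^{*}\omega_Q)(X_H^{\gamma},Y)=0.
\end{equation*}
Subtracting, I obtain the master identity $\omega_Q\bigl(X_H\circ\gamma-T\gamma(X_H^{\gamma}),\,T\gamma(Y)\bigr)=\langle d(H\circ\gamma),Y\rangle$ for every $Y$.

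From this identity the forward direction is immediate: if condition~1 holds, the left-hand side vanishes identically, forcing $d(H\circ\gamma)=0$, which is condition~2. The converse is where the main work lies. Assuming condition~2, the master identity shows the vertical vector $X_H\circ\gamma-T\gamma(X_H^{\gamma})$ is $\omega_Q$-orthogonal to all of $T(\Ima\gamma)$. Since $\Ima\gamma$ is Lagrangian, $T(\Ima\gamma)^{\bot}=T(\Ima\gamma)$, so this difference lies in $T(\Ima\gamma)$; being also vertical, and $T(\Ima\gamma)\cap\ker T\pi_Q=0$, it must vanish, which is condition~1. This non-degeneracy/transversality step, hinging on the Lagrangian property of $\Ima\gamma$ and the complementarity of the vertical subspace, is the crux of the argument.

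Finally, I translate condition~2 into the Hamilton--Jacobi equation. The relation $d(H\circ\gamma)=0$ means $H\circ\gamma$ is locally constant, hence constant on the connected manifold $Q$; calling this constant $E$ and writing $\gamma=dW$ in coordinates gives $H\bigl(q^i,\partial W/\partial q^i\bigr)=E$, which is precisely \eqref{hje}. Thus each condition is equivalent to $\gamma=dW$ solving the Hamilton--Jacobi equation, completing the proof.
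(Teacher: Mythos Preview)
Your argument is correct. Note, however, that the paper itself does not supply a proof of this theorem: its ``proof'' consists solely of the sentence ``We refer \cite{CaGrMaMaMuRo06,deLeIgdeDi08,deLeMadeDi10} for proof of this theorem.'' There is therefore nothing in the paper to compare against beyond that deferral. The proof you have written---exploiting that $\Ima(\gamma)$ is Lagrangian so that $\gamma^{*}\omega_Q=0$, deriving the master identity $\omega_Q\bigl(X_H\circ\gamma-T\gamma(X_H^{\gamma}),T\gamma(Y)\bigr)=\langle d(H\circ\gamma),Y\rangle$, and then using the complementarity of the vertical bundle and $T(\Ima\gamma)$ together with $T(\Ima\gamma)^{\bot}=T(\Ima\gamma)$ to conclude---is exactly the standard argument appearing in those cited references (in particular \cite{CaGrMaMaMuRo06}). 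So your proposal is both correct and faithful to the approach the paper is implicitly invoking.
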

\begin{proof}
We refer \cite{CaGrMaMaMuRo06,deLeIgdeDi08,deLeMadeDi10}
for proof of this theorem.
\end{proof}
\noindent
The first item in the theorem says that if $\left( q^i\left( t\right) \right) $
is an integral curve of $X_{H}^{\gamma }$, then $\left(
q^i\left( t\right) ,\gamma_j \left( q\left( t\right) \right) \right) $
is an integral curve of the Hamiltonian vector field $X_{H}$, hence a
solution of the Hamilton's equations \eqref{HamEq}. Such a solution of the Hamiltonian equations is
called horizontal since it is on the image of a one-form on $Q$. In the local picture, the second
condition implies that exterior derivative of the Hamiltonian function on the image of $\gamma $ is closed, that is, $%
H\circ \gamma $ is constant 
\begin{equation}
H\left( q^{i},\gamma _{j}\left( q\right) \right) = \text{cst}.  \label{HJ-0}
\end{equation}%
Under the assumption that $\gamma $ is closed, we can find (at least locally) a
function $W$ on $Q$ satisfying $dW=\gamma $. After the substitution of this,
equation (\ref{HJ-0}) retrieves the HJE (\ref{hje}).

%

\subsection{Implicit differential equations}
A first-order  differential equation on a manifold $M$ is a submanifold of its tangent bundle. The submanifold is said to be
an explicit differential equation (EDE) if it is image of a vector field defined on $M$, otherwise, it is called
implicit (IDE) \cite{Ja00,MaMeTu92,MaMeTu97,Ta76,MeMaTu95}.
Consider a local coordinate system $(q^i)$ on $M$, and the induced coordinates $(q^i,\dot{q}^i)$ on
its tangent bundle $TM$. An IDE can be written in form
\begin{equation} \label{ide}
  \dot{q}^i=g^i(q,\lambda), \qquad f^a(q,\lambda)=0
\end{equation}
for $i$ with certain values running from $1,...,n$ and $a=1,...,k$. Here, $g^i$ and $f^a$ are real valued differentiable functions on $(q,\lambda)\in M\times \mathbb{R}^k$.
Note that $f^{a}$'s form a matrix of maximal rank
\begin{equation}
 \text{rank}\left(\frac{\partial f^a}{\partial q^i},\frac{\partial f^a}{\partial \lambda^b}\right)=k.
\end{equation}
This implies that only certain $\dot{q}^i$'s are expressible as in \eqref{ide} depending on certain $(q^j,\lambda^a)$ for some $a$'s
equal to $k\leq n$.


A solution  of an IDE is a curve $\phi $ on $M$ satisfying that the
tangent vectors $\dot{\phi}\left( t\right)$ belong to $E$
for all $t$. The submanifold $E$ is called {\it integrable} if for all vectors $v\in E$, there
exists a solution $\phi$ satisfying $v=\dot{\phi}\left( t\right) $ for some
$t$. Equivalently, we may define ``integrability'' of an IDE without refering to the solution curves as follows. We say that the IDE
is integrable if the restriction of the tangent bundle projection $\tau_{M}$ to $E$
is surjective summersion and if $E\subset T\left( \tau _{M}\left( E\right)
\right) $.

In principle, an IDE is not necessarily integrable. For example, consider the following case \cite{MeMaTu95}.
\begin{example}
 Let $E\subset TT^{*}Q$ be a system of IDE, where $Q$ is coordinated by $q$ and $T^{*}Q$ by $(q,p)$, defined by
 \begin{equation}
  E=\{(q,p,\dot{q},\dot{p})\in TT^{*}Q, q^2+p^2+(\dot{q}+1)^2+\dot{p}=k\}
 \end{equation}
is not integrable in points $q^2+p^2=k$, $\dot{q}=-1$ and $\dot{p}=0$ with $q\neq 0$.
\end{example}

Nonetheless, we
can develop an algorithm to extract its integrable part \cite{MeMaTu95}. This algorithm
works as follows. We denote the projection of the submanifold $E$ onto $M$ by $C$%
. Each step of the algorithm, there is a fiber bundle $\left( E^{k},\tau
^{k},C^{k}\right) $ consisting of two submanifolds $E^{k}\subset E$ and $%
C^{k}\subset C$, and a (surjective submersion) projection $\tau
^{k}:E^{k}\rightarrow C^{k}$. The first step is initiated by choosing $%
\left( E^{0}=E,C^{0}=C\right) $ and, iteratively, the further steps are
defined by%
\begin{equation*}
\left(
\begin{array}{c}
E^{1}:=E^{0}\cap TC^{0} \\
C^{1}:=\tau _{Q}\left( E^{1}\right) \\
\tau ^{1}:E^{1}\rightarrow C^{1}%
\end{array}%
\right) \rightarrow ...\rightarrow \left(
\begin{array}{c}
E^{k}:=E^{k-1}\cap TC^{k-1} \\
C^{k}:=\tau _{Q}\left( E^{k}\right) \\
\tau ^{k}:E^{k}\rightarrow C^{k}%
\end{array}%
\right) \rightarrow ...\text{ }.
\end{equation*}%
In finite dimensions, there is an end for the algorithm, that is, there
exists a three tuple $\left( E^{f},\tau ^{f},C^{f}\right) $ satisfying that $%
E^{f+1}=E^{f}$ and $C^{f+1}=C^{f}$. Note that, the final manifold $E^{f}$ is
integrable. We call $E^{f}$ the integrable part of $E$. Constrained 


\begin{example}\label{examprs}
 Consider the following set $(x,p,r,s,\dot{x},\dot{p},\dot{r},\dot{s})$ on $T\mathbb{R}^4$, with the following equations \cite{MeMaTu95}
 \begin{equation}\label{nonintpart}
  E=\left\{r=p, s=0, \dot{r}=-\frac{\partial H}{\partial x}(x,p), \dot{s}=\dot{x}-\frac{\partial H}{\partial p}(x,p)\right \}
 \end{equation}
 Here, $E=E^{0}$ and $C^{0}=\{(x,p,r=p,s=0)\}$, such that $TC^{0}=\{\dot{x},\dot{p},\dot{r}=\dot{p},\dot{s}=0\}$.
 So,
 \begin{equation}\label{intpart}
  E^{0}\cap TC^{0}=\left\{ \dot{x}=\frac{\partial H}{\partial p}(x,p), \dot{r}=\dot{p}=-\frac{\partial H}{\partial x}(x,p), r=p, s=0 \right \}
 \end{equation}
From here, any other iteration $E^{k}=E^{1}$ and $TC^{k}=TC^{1}$. Hence, \eqref{intpart} is the integrable part of \eqref{nonintpart}.

\end{example}

\subsection{The Hamilton--Jacobi theory for implicit differential equations}

In this section we develop a geometric Hamilton-Jacobi theory for IDE.
Our problem is that given a set of IDE, we are not necessarily provided with a Hamiltonian vector field as explained in former sections.
Here, we propose two methods to construct our theory.
The first method consists of a theory which does refer to vector fields. The second is based on the construction of a local vector field defined on the image of
a section, but not globally on the phase space.

Let us start with the first method. Consider a submanifold $E$ of $TT^* Q$. By projecting $E$ by the tangent mapping $T\pi _Q $ onto the tangent bundle $TQ$, we arrive a submanifold $T\pi _Q (E)$ of $TQ$. Note that, $E$ refers to an IDE on $T^*Q$, whereas $T\pi _Q (E)$ refers to an IDE on $Q$. If $E$ is integrable, then $T\pi _Q (E)$ is integrable too. We see this by considering the projection $T{\pi_Q}(V)=v\in T\pi _Q (E)$ of an element $V\in E$. Note that, if $\varphi$ is a curve on $T^*Q$ and it is tangent to $V\in E$, then $ \pi_Q\circ \varphi$ is curve on $Q$ which is tangent to $v\in T\pi _Q (E)$. This shows that the projections of the solutions of $E$ are solutions of $T\pi _Q (E)$. 
Our aim is to discuss the inverse question, starting from the solutions of $T\pi _Q (E)$
construct solutions of $E$, that is to lift the solutions on $Q$ to the cotangent bundle $T^*Q$. This is the philosophy of a geometric Hamilton--Jacobi theory.
(Recall the geometric Hamilton--Jacobi theory exposed in subsection 4.1.)
Furthermore, if $E$ were not integrable, we would have to perform the integrability algorithm explained in subsection 4.2.

To answer the question we have to introduce a section $\gamma:Q\rightarrow T^*Q$.
If any solution $\psi:\mathbb{R}\rightarrow Q$ of $T\pi _Q (E)$ is a solution $\gamma\circ\psi:\mathbb{R}\rightarrow T^*Q$ of $E$,
then we denote the submanifold $T\pi _Q (E)$ by $E^\gamma$, and say that $E$ and $E^\gamma$ are $\gamma-$related.
We illustrate this in a diagram.

\begin{center}
 \begin{tikzcd}[column sep=tiny,row sep=huge]
  & &  &E\arrow[r, hook, "i"] &TT^*Q \arrow[ld, "\tau_{T^*Q}"] \arrow[rd, "T_{\pi_Q}"] & & \\
  & &C\arrow[r, hook, "i"] &T^*Q\arrow[dr,"\pi_{Q}"] & &TQ\arrow[dl,"\tau_{Q}"]  &&E^{\gamma}\arrow[ll, hook', "i"']\\
& &C\cap \text{Im}(\gamma)\arrow[u, hook, "i"] & &Q\arrow[lu,"\gamma",bend left=20] & &\mathbb{R}\arrow[ll,"\psi"']
 \end{tikzcd}
\end{center}

\medskip

In coordinates, a submanifold $E$ of $TT^{*}Q$ can be given by the set of functions $\Phi^{A}:TT^{*}Q\rightarrow \mathbb{R}$,
  $$\Phi^A(q^i,p_i,\dot{q}^i,\dot{p}_i)=0 $$  The projection of $E$ onto $T^{*}Q$ by means of $\tau_{T^* Q}$ results with a submanifold $C$ of $T^{*}Q$ given by the set of functions
  $$\Psi^{\alpha}(q^i,p_i)=0.$$
  Consider the intersection of the projected submanifold $C$ and the image space of the one-form $\gamma$. We denote this globally by $C\cap \text{Im}(\gamma)$ and locally by $\Psi^{\alpha}(q^i,\gamma_j(q))=0$. If a solution curve is represented by $\psi^i(t)\subset Q$, the composition $\gamma \circ \psi (t)=(\psi^i(t),\gamma_i(\psi(t)))$ is a curve on $T^{*}Q$ and the time derivative of the curve is
\begin{eqnarray*}
\frac{d}{dt}(\gamma \circ \psi)(0)&=&T\gamma (\psi(0)) \cdot \dot{\psi}(0) \\&=&\left(\psi^i(0),\gamma_i(\psi(0)),\dot{\psi}^i(0),\frac{\partial \gamma_j}{\partial q^i}\dot{\psi}^i(0)\right).
\end{eqnarray*}
 Then, the equations of the submanifold $E$ along $\gamma$ take the form
 \begin{equation}
  \Phi^{A}\left(q^i,\gamma_i(q),\dot{q}^i,\frac{\partial \gamma_j}{\partial q^i}\dot{q}^i\right)=0,
 \end{equation}
provided that $\Psi^{\alpha}(q^i,\gamma_i(q))=0$ along $\gamma(Q)\subset C$. Here, we assumed that $\psi^i(0)=q^i$.

 In the second case, we consider an additional section $\sigma:T^{*}Q\rightarrow TT^{*}Q$
 such that $\sigma(C)\subset E$.

 \begin{center}
 \begin{tikzcd}[column sep=tiny,row sep=huge]
  & &  &E\arrow[r, hook, "i"] &TT^*Q \arrow[ld, "\tau_{T^*Q}"] \arrow[rd, "T_{\pi_Q}"] & & \\
  & &C\cap \text{Im}(\gamma)\arrow[ur,"\sigma", bend left=
20]\arrow[r, hook, "i"] &T^*Q\arrow[dr,"\pi_{Q}"] & &TQ\arrow[dl,"\tau_{Q}"]  &&E^{\gamma}\arrow[ll, hook', "i"']\\
& & & &Q\arrow[lu,"\gamma",bend left=20] & &\mathbb{R}\arrow[ll,"\psi"']
 \end{tikzcd}
\end{center}
Since $E$ is implicit, there may exist several vectors in $E$ projecting to the same point, say $c$, in $C$. The role of the section $\sigma$ is to reduce this unknown number to one. We are additionally require that the domain of the section $\sigma$ be the intersection of $\Ima(\gamma)$ and $C$ since, for implicit systems, $C$ may not be the whole of $T^*Q$. As a result, we arrive at a vector field $X_\sigma$. Note that $X_\sigma$ satisfies
\begin{equation}
 \iota_{X_{\sigma}}\omega_Q=\Theta(\gamma(q)).
\end{equation}
for an arbitrary one-form $\Theta$ defined on $\gamma(q)$.

We define a vector field $X_{\sigma}^{\gamma}$ on the tangent bundle $TQ$ by the commutation of the following diagram.
\[
\xymatrix{ T^{*}Q
\ar[dd]^{\pi} \ar[rrr]^{X_{\sigma}}&   & &T(T^{*}Q)\ar[dd]^{T\pi}\\
  &  & &\\
 Q\ar@/^2pc/[uu]^{\gamma}\ar[rrr]^{X_{\sigma}^{\gamma}}&  & & TQ }
\]
Explicitly,
\begin{equation}\label{hjsect}
 X_{\sigma}^{\gamma}=T_{\pi}\circ X_{\sigma}\circ \gamma.
\end{equation}

In local coordinates, the vector field $X_{\sigma}$ and its projection $X_{\sigma}^{\gamma}$ can be written as
\begin{equation}
 X_{\sigma}=\sigma^i(q,\gamma(q))\frac{\partial}{\partial q^i}+\sigma_i(q,\gamma(q))\frac{\partial}{\partial p_i},\qquad X_{\sigma}^{\gamma}=\sigma^i(q,\gamma(q))\frac{\partial}{\partial q^i},
\end{equation}
respectively. 
Using a one-form section $\gamma$ on $Q$, the tangent lift of the projected
vector field $X_{\sigma}^{\gamma}$ is
\begin{equation}
T\gamma \left(X_{\sigma}^{\gamma}\right)=\sigma^i\left(\frac{\partial}{\partial q^i}+\frac{\partial \gamma_j}{\partial q^i}\frac{\partial}{\partial p_j}\right)
\end{equation}
Using \eqref{hjsect}, we find an expression relating the section $\sigma$ and the vector fields as follows.
\begin{equation}\label{hjesect}
 \sigma^i(q,\gamma(q))\frac{\partial \gamma_j}{\partial q^i}(q)=\sigma_j(q,\gamma(q)).
\end{equation}
We are ready now to state the following lemma.
\begin{lemma} \label{lemma}
 Given the conditions above, we say that: the two vector fields $X_{\sigma}$ and $X_{\sigma}^{\gamma}$ are $\gamma$-related if and only if $\eqref{hjesect}$ is fulfilled.
\end{lemma}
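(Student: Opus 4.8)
The plan is to unwind the definition of $\gamma$-relatedness and reduce the claim to a coordinate identity by matching horizontal and vertical components in the Darboux frame. Recalling from Theorem \ref{HJT} that two vector fields are $\gamma$-related precisely when $T\gamma(X_{\sigma}^{\gamma})=X_{\sigma}\circ\gamma$, I would take this equality---an identity between vector fields along $\gamma(Q)\subset T^{*}Q$---as the object to verify and show that it is equivalent to \eqref{hjesect}.

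First I would record both sides of the identity in local coordinates. The left-hand side is already computed in the excerpt: since $X_{\sigma}^{\gamma}=\sigma^i(q,\gamma(q))\,\partial/\partial q^i$ is a vector field on $Q$, its tangent lift through $\gamma$ is
\[
T\gamma(X_{\sigma}^{\gamma})=\sigma^i\frac{\partial}{\partial q^i}+\sigma^i\frac{\partial \gamma_j}{\partial q^i}\frac{\partial}{\partial p_j}.
\]
The right-hand side is simply $X_{\sigma}$ evaluated along the image of $\gamma$, which by the local expression for $X_{\sigma}$ reads
\[
X_{\sigma}\circ\gamma=\sigma^i\frac{\partial}{\partial q^i}+\sigma_i\frac{\partial}{\partial p_i},
\]
with all coefficients understood at the point $(q,\gamma(q))$.

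The core of the argument is then a component-by-component comparison. The horizontal parts (the coefficients of $\partial/\partial q^i$) coincide identically, both being equal to $\sigma^i$; this is automatic, since $X_{\sigma}^{\gamma}$ was defined as the $T\pi$-projection of $X_{\sigma}$ along $\gamma$. Consequently the equality $T\gamma(X_{\sigma}^{\gamma})=X_{\sigma}\circ\gamma$ reduces to the equality of the vertical parts, that is, the coefficients of $\partial/\partial p_j$, which reads $\sigma^i\,\partial\gamma_j/\partial q^i=\sigma_j$. Because $\{\partial/\partial q^i,\partial/\partial p_j\}$ is a local frame, the vector-field identity holds if and only if it holds coefficient-wise, and this is exactly \eqref{hjesect}. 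Reading the resulting equivalence in both directions yields the ``if and only if'' assertion.

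I do not expect a serious obstacle: once both sides are expressed in the Darboux frame, the statement is essentially a transcription of the $\gamma$-relatedness condition. The only point requiring care is the observation that the horizontal components impose no constraint---so that the entire content of $\gamma$-relatedness is carried by the vertical components---together with keeping the arguments $(q,\gamma(q))$ of $\sigma^i$ and $\sigma_i$ consistent on both sides of the identity.
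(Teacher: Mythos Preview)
Your proposal is correct and follows essentially the same approach as the paper: the paper's argument is precisely the coordinate computation immediately preceding the lemma, where $T\gamma(X_\sigma^\gamma)$ is written out and compared componentwise with $X_\sigma\circ\gamma$, so that the vertical components yield \eqref{hjesect}. Your write-up simply makes the comparison and the ``if and only if'' direction more explicit, but the content is identical.
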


Again, this construction can be mimicked for nonintegrable IDE that are submanifolds $E$ of a higher-order bundle, after performing the integrability algorithm given in 4.2.

\subsection{A HJ theory for IHS}

As we have summarized in subsection 
(\ref{LSTT*Q}), for every Lagrangian submanifold $E$ of $TT^*Q$, there exists a Morse family $F:TT^{*}Q\rightarrow \mathbb{R}$ generating $E$. This enables us to write $E$ locally as 
\begin{equation} \label{EandF}
E=\left \{\left(q^i,p_i;\frac{\partial F}{\partial p_i},-\frac{\partial F}{\partial q^i}\right )\in TT^*Q:\frac{\partial F}{\partial \lambda^a}=0\right\}
\end{equation}
where $F=F(q,p,\lambda)$. We cite two important studies \cite{BeTu80,MaMoMu90} related with the problem addressed in this subsection.

We introduce a differential one-form $\gamma$ on the base manifold $Q$. 
See that, $\Ima(\gamma)$ is a Lagrangian submanifold of $ T^*Q$ so that there is an inclusion $\imath:\Ima(\gamma)\mapsto T^*Q$. Use the inclusion $\imath$ in order to pull the bundle $(P,\pi,T^*Q)$ back over $\Ima(\gamma)$. By this, one arrives at a fiber bundle $(\imath^*P,\imath^*\pi,\Ima(\gamma))$. For the present case, the commutative diagram for a generic pullback bundle exhibited in (\ref{pbb}) takes the following particular form.
\begin{equation}\label{pbb-F}
  \xymatrix{
\imath^*P \ar[rr] ^{\varepsilon} \ar[dd]^{\imath^*\pi}&& P \ar[dd]^{\pi}\\ \\
\Ima(\gamma) \ar [rr]_{\imath}&& T^*Q
}
\end{equation}
Here, the total space $$\imath^*P=\left \{(\gamma(q),z)\in \Ima(\gamma)\times P : \pi(z)\in \Ima(\gamma)\right \}$$ of the pull-back bundle is a submanifold of $P$ with $\varepsilon$ is the corresponding inclusion. A local coordinate system on $\imath^*P$ can be taken as $(q,\gamma(q),\lambda)$. Although restriction of the Morse family on $\imath^*P$ should formally be written as $F\circ \epsilon$, we will abuse notation by still denoting it by $F$ ,but to highlight the difference we shall write the arguments of the function as $F=F(q,\gamma(q),\lambda)$. The submanifold generated by $F=F(q,\gamma(q),\lambda)$ is given by
\begin{equation} \label{EandF-gamma}
E\vert_{\Ima({\gamma})}=\left \{\left(q^i,\gamma_i(q);\frac{\partial F}{\partial p_i},-\frac{\partial F}{\partial q^i}\right )\in TT^*Q:\frac{\partial F}{\partial \lambda^a}=0\right\}.
\end{equation}
 Note that, if the Lagrangian submanifold $E$ is the image of a Hamiltonian vector field $X_H$, then $E\vert_{\Ima({\gamma})}$ reduces to the image space of the composition $X_H \circ \gamma$.

The submanifold $E\vert_{\Ima({\gamma})}$ exhibited in (\ref{EandF-gamma}) does not depend on the momentum variables. This enables us to project it to a submanifold $E^{\gamma}$ of $TQ$ by the tangent mapping $T\pi_Q$ as follows
\begin{equation} \label{E-gamma}
E^{\gamma}=T\pi_Q\circ E\vert_{\Ima({\gamma})}=\left \{\left(q^i,\frac{\partial F}{\partial p_i}(q,\gamma(q),\lambda)\right )\in TQ:\frac{\partial F}{\partial \lambda^a}=0\right\}.
\end{equation}
Note that the submanifold $E^{\gamma}$ defines an implicit differential equation on $Q$. We state the generalization of the Hamilton-Jacobi theorem (\ref{HJT}) as follows. 
\begin{lemma} \label{nHJT}
The following conditions are equivalent for a closed one-form $\gamma$: 
\begin{enumerate}
\item The Lagrangian submanifold $E$ in (\ref{EandF}) and the submanifold $E^{\gamma}$ in (\ref{E-gamma}) are $\gamma$-related, that is
 $$T\gamma(E^{\gamma})=E\vert_{\Ima({\gamma})}$$

\item And it is fulfilled that $dF(q,\gamma(q),\lambda) =0$,
where $F$ is the Morse family generating $E$.
\end{enumerate}
\end{lemma}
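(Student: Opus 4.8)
The plan is to establish the equivalence by a direct coordinate computation carried out on the critical set $\{\partial F/\partial\lambda^a=0\}$, which is the common parameter domain of both $E^{\gamma}$ and $E\vert_{\Ima(\gamma)}$. Since each of these submanifolds is the image of a map defined on this critical set, the set equality $T\gamma(E^{\gamma})=E\vert_{\Ima(\gamma)}$ in condition (1) can be verified slot by slot in the fibred coordinates $(q^i,p_i,\dot q^i,\dot p_i)$ of $TT^*Q$. The first ingredient I would record is the tangent lift of the section $\gamma(q)=(q^i,\gamma_i(q))$, namely $T\gamma(q^i,\dot q^i)=(q^i,\gamma_i(q),\dot q^i,\tfrac{\partial\gamma_j}{\partial q^i}\dot q^i)$, exactly as in the computation of $\tfrac{d}{dt}(\gamma\circ\psi)$ that precedes the lemma.

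Second, substituting $\dot q^i=\partial F/\partial p_i$ from the definition (\ref{E-gamma}) of $E^{\gamma}$ yields
\begin{equation*}
T\gamma(E^{\gamma})=\left\{\left(q^i,\gamma_i(q),\frac{\partial F}{\partial p_i},\frac{\partial\gamma_j}{\partial q^i}\frac{\partial F}{\partial p_i}\right):\frac{\partial F}{\partial\lambda^a}=0\right\}.
\end{equation*}
Comparing this with $E\vert_{\Ima(\gamma)}$ as written in (\ref{EandF-gamma}), the $q$-slot, the $p=\gamma(q)$-slot and the $\dot q=\partial F/\partial p$-slot coincide automatically, and both families sit over the same critical set. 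Hence condition (1) reduces to equating the remaining $\dot p_j$-slots, i.e. to the identity
\begin{equation*}
\frac{\partial F}{\partial q^j}+\frac{\partial\gamma_j}{\partial q^i}\frac{\partial F}{\partial p_i}=0\qquad\text{along }\Big\{\tfrac{\partial F}{\partial\lambda^a}=0\Big\}.
\end{equation*}

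Third, I would compute $dF(q,\gamma(q),\lambda)$, treating $F$ as a function of $(q,\lambda)$ through the restriction to the pullback bundle $\imath^*P$ in which $p$ is replaced by $\gamma(q)$. The chain rule gives a $d\lambda^a$-component equal to $\partial F/\partial\lambda^a$, which vanishes on the critical set underlying both $E^{\gamma}$ and $E\vert_{\Ima(\gamma)}$; thus condition (2) has content only in its $dq^j$-component $\tfrac{\partial F}{\partial q^j}+\tfrac{\partial\gamma_i}{\partial q^j}\tfrac{\partial F}{\partial p_i}$. The sole difference between this and the identity produced by condition (1) is the transposition of indices on the derivative of $\gamma$: condition (1) carries $\partial\gamma_j/\partial q^i$ whereas condition (2) carries $\partial\gamma_i/\partial q^j$. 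The hypothesis that $\gamma$ is closed, $d\gamma=0$, is exactly the symmetry $\partial\gamma_j/\partial q^i=\partial\gamma_i/\partial q^j$, and this identifies the two equations, giving both implications at once.

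I expect the only real obstacle to be careful index bookkeeping: tracking the free and summed indices through the tangent lift and the chain rule so that the transposed Jacobian of $\gamma$ surfaces in the right place, and stating cleanly that the fibre constraint $\partial F/\partial\lambda^a=0$ is shared by both sides, so that the $d\lambda$-part of $dF$ never contributes. Once these points are fixed, the closedness of $\gamma$ performs the essential step, exactly as $d\gamma=0$ does in the classical statement of Theorem \ref{HJT}.
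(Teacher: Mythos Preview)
Your proposal is correct and follows essentially the same route as the paper's proof: both reduce condition (1) to the single $\dot p$-slot identity $\partial F/\partial q^{j}+(\partial\gamma_j/\partial q^{i})\,\partial F/\partial p_{i}=0$ on the critical set, compute the $dq^{j}$-component of $dF(q,\gamma(q),\lambda)$ by the chain rule as $\partial F/\partial q^{j}+(\partial\gamma_i/\partial q^{j})\,\partial F/\partial p_{i}$, and invoke the closedness of $\gamma$ to swap the indices on the Jacobian of $\gamma$ and identify the two expressions. Your write-up is in fact slightly more explicit than the paper's in isolating the index transposition as the crux and in noting that the $d\lambda^{a}$-component contributes nothing on the shared critical set.
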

\begin{proof}
The one-form $\gamma=\gamma_idq^i$ is closed, that is, $\frac{\partial \gamma_{i}}{\partial q^j}=\frac{\partial \gamma_{j}}{\partial q^i}$. The first assertion in lemma \ref{nHJT} can be written locally as
\begin{equation} \label{lift}
\frac{\partial \gamma_{j}}{\partial q^i}\frac{\partial F}{\partial p_j}(q,\gamma(q),\lambda)+\frac{\partial F}{\partial q^i}(q,\gamma(q),\lambda)=0,
\end{equation}
with the conditions that $\partial F/ \partial \lambda^a=0$. We make a simple calculation to compute the exterior derivative of the Morse family as follows
\begin{eqnarray}\label{pfM}
dF(q,\gamma(q),\lambda)&=&\frac{\partial F}{\partial q^j}dq^j+\frac{\partial F}{\partial p_i}\gamma_{i,j}dq^j+\frac{\partial F}{\partial \lambda^a}d{\lambda^a}.
 \end{eqnarray}
 Note that, after the substitution of (\ref{lift}) into (\ref{pfM}) and by employing the symmetry $\frac{\partial \gamma_{i}}{\partial q^j}=\frac{\partial \gamma_{j}}{\partial q^i}$, we arrive at that the exterior derivative
 of $F$ vanishes when $p=\gamma(q)$. To prove the reverse direction, it is enough to repeat these steps in reverse order.
 \end{proof}

Assume now that the one-form $\gamma$ is exact so that $\gamma=dW(q)$ for some real valued function $W$ called the characteristic function on the base manifold $Q$. Then the second condition in lemma \ref{nHJT} gives the implicit Hamilton-Jacobi equation (IHJ equation)
 \begin{equation}\label{iHJEq}
   F\left(q,\frac{\partial W}{\partial q}, \lambda\right) =\text{cst}, \qquad \frac{\partial F}{\partial \lambda^a}\left(q,\frac{\partial W}{\partial q},\lambda\right)=0.
 \end{equation} 
In the lemma (\ref{nHJT}), if the Lagrangian submanifold $E$ is the image of a Hamiltonian vector field $X_H$, then $E^\gamma$ becomes the image space of the vector field $X_H^\gamma$ in \eqref{gammarelated} and  we retrieve the classical HJ theory given in (\ref{Xg}).

It is possible to generalize Lemma (\ref{nHJT}) by replacing the image space $\Ima{\gamma}$ by an arbitrary Lagrangian submanifold $S$ of $T^*Q$. Note that, according to the generalized Poincar\'{e} lemma, there exists a Morse family $W$ on the total space of a smooth bundle $(R,\tau,Q)$ generating the Lagrangian submanifold $S$. We equip the total space $R$ with the coordinates $(q^i,\mu^\alpha)$. Then we have that the Lagrangian submanifold $S$ can be written as 
\begin{equation}
S=\left\{ \left(q^i,\frac{\partial W} {\partial q^i} (q,\mu)\right)\in T^*Q: \frac{\partial W} {\partial \mu^\beta} (q,\mu)  =0 \right \}.
\end{equation}
Now, the inclusion in the diagram (\ref{pbb-F}) becomes $\imath:S\mapsto T^*Q$. In this case, the restriction of the Morse family $F$ generating the submanifold $E$ of $TT^*Q$ to the inclusion $\epsilon$ defines the following submanifold 
\begin{equation} \label{EandF-S}
E\vert_{S}=\left \{\left(q^i,\frac{\partial W} {\partial q^i};\frac{\partial F}{\partial p_i},-\frac{\partial F}{\partial q^i}\right )\in TT^*Q:\frac{\partial F}{\partial \lambda^a}=0,\frac{\partial W} {\partial \mu^\beta} (q,\mu)  =0 \right\},
\end{equation}
 where $W=W(q,\mu)$ and $F=F(q,\frac{\partial W} {\partial q^i}(q,\mu),\lambda)$. The submanifold $E\vert_{S}$ does not depend on the momentum variable $p$ explicitly. So that, its projection $E^S$ to the tangent bundle $TQ$ by means of $T\pi_Q$ is well-defined and given by
 \begin{equation} \label{E-S}
E^{S}=\left \{\left(q^i;\frac{\partial F}{\partial p_i}(q,\frac{\partial W}{\partial q}(q,\mu),\lambda)\right )\in TQ:\frac{\partial F}{\partial \lambda^a}=0, \frac{\partial W}{\partial \mu^\beta}=0\right\}.
\end{equation}
We are now ready to state a generalization of the lemma (\ref{nHJT}) as follows.
\begin{lemma} \label{nHJT-2}
Let $S$ be a Lagrangian submanifold of $T^*Q$ generated by Morse family $W=W(q,\mu)$ defined on the total space of a bundle $(R,\tau,Q)$. The following conditions are equivalent 
\begin{enumerate}
\item The Lagrangian submanifold $E$ in (\ref{EandF}) and the submanifold $E^{S}$ in (\ref{E-S}) are $S$-related, that is $$T(dW\vert_\mu)(E^{S})=E\vert_{S}$$ for every $\mu$, where $E\vert_{S}$ is in (\ref{EandF-S}).
\item $dF\left(q,\frac{\partial W}{\partial q}(q,\mu),\lambda\right) =0$ for all $\mu$. Here, $F$ is the Morse family generating $E$.
\end{enumerate}
\end{lemma}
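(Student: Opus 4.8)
The plan is to reduce the statement to a fiberwise version of Lemma~\ref{nHJT}, treating the parameter $\mu$ as a spectator. First I would fix a value of $\mu$ and observe that $dW\vert_\mu=\frac{\partial W}{\partial q^i}(q,\mu)\,dq^i$ is a genuine one-form on $Q$, namely the differential of the function $q\mapsto W(q,\mu)$; consequently its coefficients obey the symmetry $\frac{\partial^2 W}{\partial q^i\partial q^j}=\frac{\partial^2 W}{\partial q^j\partial q^i}$. This symmetry is exactly what played the role of the closedness of $\gamma$ in the proof of Lemma~\ref{nHJT}, so the argument there can be rerun with $\gamma_i$ replaced by $\frac{\partial W}{\partial q^i}(q,\mu)$ and $\gamma_{i,j}$ by $\frac{\partial^2 W}{\partial q^j\partial q^i}(q,\mu)$, everything being evaluated on the constraint set $\frac{\partial F}{\partial\lambda^a}=0$, $\frac{\partial W}{\partial\mu^\beta}=0$ and at $p=\frac{\partial W}{\partial q}(q,\mu)$.

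Next I would write the first condition in local coordinates. The tangent lift of the section $dW\vert_\mu\colon q\mapsto(q,\frac{\partial W}{\partial q}(q,\mu))$ sends a point $(q^i,\frac{\partial F}{\partial p_i})$ of $E^{S}$ in \eqref{E-S} to
\[
\Big(q^i,\tfrac{\partial W}{\partial q^i};\ \tfrac{\partial F}{\partial p_i},\ \tfrac{\partial^2 W}{\partial q^j\partial q^i}\tfrac{\partial F}{\partial p_j}\Big),
\]
and requiring that this coincide with the corresponding point of $E\vert_{S}$ in \eqref{EandF-S} forces the identity
\[
\frac{\partial^2 W}{\partial q^j\partial q^i}(q,\mu)\,\frac{\partial F}{\partial p_j}+\frac{\partial F}{\partial q^i}=0,
\]
the exact analogue of \eqref{lift}, to be read for every $\mu$ on the critical set. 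Then I would compute, for fixed $\mu$, the exterior derivative in $(q,\lambda)$ of the composite $F(q,\frac{\partial W}{\partial q}(q,\mu),\lambda)$, obtaining the analogue of \eqref{pfM},
\[
dF\big\vert_\mu=\Big(\frac{\partial F}{\partial q^j}+\frac{\partial F}{\partial p_i}\frac{\partial^2 W}{\partial q^i\partial q^j}\Big)dq^j+\frac{\partial F}{\partial\lambda^a}\,d\lambda^a .
\]
Substituting the previous relation into the $dq^j$-coefficient and invoking the symmetry of the Hessian of $W$ makes that coefficient vanish, while the $d\lambda^a$-coefficient vanishes because points of $E$ satisfy $\frac{\partial F}{\partial\lambda^a}=0$; hence $dF\vert_\mu=0$. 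Since $\mu$ was arbitrary, this is precisely the second assertion, and reversing the substitution recovers the first from the second.

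The genuinely delicate point, which I would flag explicitly, is the bookkeeping of the parameter $\mu$. One must read condition (2) as ``$dF$ vanishes for each fixed value of $\mu$'', so that only $dq$ and $d\lambda$ components occur and the quantifier ``for all $\mu$'' does real work, rather than as the vanishing of the total differential of the composite in $(q,\mu,\lambda)$; the latter would produce an extra term $\frac{\partial F}{\partial p_i}\frac{\partial^2 W}{\partial q^i\partial\mu^\beta}\,d\mu^\beta$ that is not controlled by $S$-relatedness alone. With that convention fixed, the proof is the fiberwise repetition of Lemma~\ref{nHJT} described above, and the Morse-family condition on $W$ guarantees both that $S$ is a bona fide Lagrangian submanifold and that the critical set $\{\frac{\partial W}{\partial\mu^\beta}=0\}$ on which everything is evaluated is an $n$-dimensional manifold, so that the matching of $E\vert_S$ and $T(dW\vert_\mu)(E^S)$ makes sense pointwise in the parameters $(q,\mu,\lambda)$.
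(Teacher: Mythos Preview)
Your proposal is correct and follows essentially the same route the paper takes: the paper itself merely sketches that the proof of Lemma~\ref{nHJT} is to be rerun with $\mu$ frozen, the closedness of $\gamma$ being replaced by the symmetry of the Hessian $\partial^2 W/\partial q^i\partial q^j$. Your write-up is in fact more detailed than the paper's, and your explicit remark that condition~(2) must be read as the vanishing of $dF$ in the $(q,\lambda)$ variables for each fixed $\mu$---rather than as a total differential including a $d\mu$-component---clarifies a point the paper leaves implicit.
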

We only give some clues instead of writing the whole proof of this lemma since it is very similar to the proof of lemma \ref{nHJT}. The closedness of the one-form $\gamma$ is replaced by the commutativity of the second partial derivatives of the Morse family $W$ with respect to $q$.

Let us comment on the notation $T(dW\vert_\mu)$ as well. If the fiber variable $\mu$ is frozen, then the exterior derivative $dW\vert_\mu$ of the Morse family $W$ becomes a differentiable mapping from $Q$ to $T^*Q$ and its tangent mapping  $T(dW\vert_\mu)$ goes from $TQ$ to $TT^*Q$. We remark that this last comment is a generalization of the Hamilton-Jacobi theory for the Lagrangian submanifolds studied in \cite{BaLeDi13} as well.

\begin{example}
  Let $V$ be a nonholonomic $k$-dimensional distribution on $Q$ spanned by the vector fields $X_a$. We define the following Morse family on the total space of the fiber bundle $T^*Q\times \mathbb{R}^k$
  \begin{equation}\label{}
    F(q,p,\lambda)= p_i \lambda^a X_a^i(q).
  \end{equation}
  Here $F$ is a Morse family and determines a Lagrangian submanifold of $TT^*Q$ given by
  \begin{equation}\label{}
    \dot{q}^i= \lambda^a X_a^i(q), \qquad  \dot{p}_j= -p_i \lambda^a \frac{\partial X_a^i}{\partial q^j}, \qquad p_i X_a^i(q)=0.
  \end{equation}
  This system is integrable according to \cite{Ja00}. The corresponding Hamilton-Jacobi equation is computed to be
    \begin{equation}\label{}
   \frac{\partial W}{\partial q^i} \lambda^a X_a^i(q)=\text{cst}.
  \end{equation}

\end{example}

\section{Complete solutions of the HJ equation for IHS}

Before writing the complete solution of a HJ equation of IHS, we first investigate the complete solutions of HJ equations for explicit systems in terms of Morse families and Lagrangian submanifolds. 

\subsection{Lagrangian submanifolds generated by complete solutions}
In the classical sense, a solution $W$ of the HJ equation (\ref{hje}) is called complete if it depends on additional variables that equal in number to the dimension of the base manifold $Q$ \cite{Ar89}. To illustrate this, we start by considering two copies of the configuration manifold and denote them by $Q$ and $\bar{Q}$. Endow these manifolds with local coordinates $(q^i)$ and $(\bar{q}^j)$, respectively. The number of arbitrary parameters for the general solution is given by $j$, which does not necessarily equal $i$. A complete solution is a real valued function $W=W(\bar{q},q)$ on the product space $\bar{Q}\times Q$ that resolves the HJ equation (\ref{hje}). This function generates three different Lagrangian submanifolds, let us show them.

Construct the fiber bundle structure $(\bar{Q}\times Q,\rho,Q)$. Here, the bundle projection $\rho$ is assumed to be  a projection to the second factor. As discussed previously, a real valued function $W=(\bar{q},q)$ on the total space $\bar{Q}\times Q$ is called a Morse family if the matrix $[\partial ^{2}W/\partial \bar{q}^i \partial q^j]$ is non-degenerate. We draw the following diagram
 \begin{equation} \label{W-MF-1}
  \xymatrix{
\mathbb{R}&&\bar{Q}\times Q \ar[dd]^{\rho}\ar[ll]^{W}&& T^*Q \ar[dd]^{\pi_Q}\\ \\ &&
Q  \ar@{=}[rr]&& Q
}
\end{equation}
 In this case, the Morse family $W$ defines a Lagrangian submanifold of $T^*Q$ given by
  \begin{equation}\label{cs}
  S=\left\{\left((q^i,\frac{\partial W}{\partial q^i}(\bar{q},q)\right)\in T^*Q: \frac{\partial W}{\partial \bar{q}^i}(\bar{q},q)=0 \right\}.
\end{equation}

Note that, by changing the roles of $Q$ and $\bar{Q}$, we may define a bundle structure $(\bar{Q}\times Q,\bar{\rho},\bar{Q})$ over the manifold $\bar{Q}$ and obtain a diagram similar to (\ref{W-MF-1}). In this case, $W$ defines a Lagrangian submanifold $\bar{S}$ of $T^*\bar{Q}$ as follows
  \begin{equation}\label{cs-bar}
  \bar{S}=\left\{\left(\bar{q}^i,\frac{\partial W}{\partial \bar{q}^i}(\bar{q},q)\right)\in T^*\bar{Q}: \frac{\partial W}{\partial {q}^j}(\bar{q},q)=0 \right\}.
  \end{equation}

Another Lagrangian submanifold generated by $W$ is the result of  following observation. The cotangent bundle $T^*(\bar{Q}\times Q)=T^*\bar{Q}\times T Q$ of the product space is a symplectic manifold equipped with the symplectic two-form $\omega_{\bar{Q}}\ominus \omega_Q$ \cite{We77}. It is evident that image of the exterior derivative $dW$ of a complete solution $W$ is a Lagrangian submanifold 
\begin{equation}\label{ImW}
\hat{S}=\left\{ \left(\bar{q},q;\frac{\partial W}{\partial \bar{q}},\frac{\partial W}{\partial q}\right)\in T^*(\bar{Q}\times Q) \right \}.
\end{equation}
 It is known that a Lagrangian submanifold of $T^*(\bar{Q}\times Q)$ defines a symplectomorphism between  $T^*\bar{Q}$ and $ T^* Q$. The Morse family $W=W(\bar{q},q)$ generates a symplectomorphism according to the following identity
\begin{equation}
\theta_{\bar{Q}}\ominus \theta_Q  =\bar{p}_{i}d\bar{q}^{i}-p_{i}dq^{i}=d
W\left( q,\bar{q}\right),   \label{tee-}
\end{equation}
where we assume the Darboux' coordinates on the cotangent bundles. In the local picture, the induced symplectomorphism is given by
\begin{equation} \label{cd}
\varphi:T^{\ast }\bar{Q} \rightarrow T^{\ast }Q:\left( \bar{q}^{i},\frac{\partial W}{\partial
\bar{q}^{j}}\right)\rightarrow \left( q^{i},-\frac{\partial W}{%
\partial q^{j}}\right).
\end{equation}%

\subsection{Complete solutions for the case of implicit Hamiltonian systems}

Let us first try to geometrize the complete solutions of the HJ equation for explicit systems. A function $W=W(\bar{q},q)$ is a complete solution of the HJ equation (\ref{hje}) if the Hamiltonian function $H$ is constant when it is restricted to $S$ exhibited in (\ref{cs}). That is, a complete solution $W$ is the one satisfying 
\begin{equation}
H\left(q,\frac{\partial W}{\partial q}(\bar{q},q)\right)=\text{cst}, \quad \frac{\partial W}{\partial \bar{q}}=0.
\end{equation}
Using the symplectic diffeomorphism $\varphi$ in (\ref{cd}) generated by the function $W$, we pull the function $H$ back to $T^*\bar{Q}$ and see that $\varphi^*H$ is a constant. So that the dynamics generated by $\varphi^*H$ is trivial. 

Now, assume that, we have a Lagrangian submanifold $E$ of $TT^*Q$. Then there exists a Morse family $F$ generating $E$. A complete solution of the implicit Hamilton-Jacobi equation (\ref{iHJEq}) is a smooth function $W$ satisfying 
\begin{equation}
F\left(q,\frac{\partial W}{\partial q},\lambda\right)=cst, \quad \frac{\partial W}{\partial \bar{q}} (\bar{q},q) =0, \qquad \frac{\partial F}{\partial \lambda} \left(q,\frac{\partial W}{\partial q},\lambda\right)=0.
\end{equation}
We aim to pull the dynamics $E$ or the Morse family $F$ back to $T^*\bar{Q}$. To achieve this goal, we recall the definition of the pullback bundle in (\ref{pbb}) and apply it to the diffeomorphism (\ref{cd}). This way we obtain a fiber bundle structure $(\varphi^*P,\varphi^*\pi,T^{\ast }\bar{Q})$ where the total space is defined to be $$\varphi^*P=\{(\bar{z},r)\in T^{\ast }\bar{Q}\times P :\varphi(z)=\pi(r)\}$$
equipped with the induced coordinates $(\bar{q}^i,\bar{p}^i,\lambda)$, and $\varphi^*\pi$ is the projection to the second factor. We draw the following commutative diagram in order to summarize the discussion.
\begin{equation}\label{pbb-GS}
  \xymatrix{
\varphi^*P \ar[rr] ^{\hat{\varphi}} \ar[dd]^{\varphi^*\pi}&& P \ar[dd]^{\pi}\\ \\
T^{\ast }\bar{Q} \ar [rr]_{\varphi}&& T^*Q
}
\end{equation}
Here, $\hat{\varphi}$ is a diffeomorphism and in the local coordinates reads
\begin{equation} \label{cd--}
\hat{\varphi}:\varphi^*P  \leftrightarrow P :\left( \bar{q}^{i},\frac{\partial W}{\partial
\bar{q}^{i}},\lambda \right) \leftrightarrow \left( q^{i},-\frac{\partial W}{%
\partial q^{i}},\lambda \right) .
\end{equation}%
The pullback $\bar{F}=F \circ \hat{\varphi} $ of the Morse function $F$ by $\hat{\varphi}$ is a Morse family on the total space of the pullback bundle $(\varphi^*P,\varphi^*\pi,T^*\bar{Q})$. Note that $\bar{F}$ is a constant function and the implicit differential equation generated by $\bar{F}$ renders trivial dynamics. 

Generalizing, the most general form of a Lagrangian submanifold of $T^*\bar{Q}\times T^*Q$ is generated by a Morse family $W$ defined on the total space of the fiber bundle $(R,\tau,M)$ where the base manifold $M$ is a submanifold of $\bar{Q} \times Q$. Let us depict it in a diagram. 
\begin{equation}
\xymatrix{
\mathbb{R}&& R \ar[dd]^{\tau}\ar[ll]_{W}&& T^*\bar{Q}\times T^*Q \ar[dd]^{\pi_{(\bar{Q}\times Q)}}
\\
\\
&& M \ar [rr]_{\imath}&& \bar{Q} \times Q
}
\end{equation}
A complete solution to the implicit Hamilton Jacobi equation (\ref{iHJEq}) is a Morse family $W$ defined on the total space $R$. Note that, the Morse family $F$ generating the dynamics on $E$ reduces to a constant function on the Lagrangian submanifold generated by $W$.

Let us now depict the situation in coordinates. Assume that a submanifold $M$ of $\bar{Q} \times Q$ is defined by a number ``$l$" of equations
\begin{equation}
U^{a}\left( \bar{q},q\right) =0,\qquad a=1,...,l.
\end{equation}%
Consider a real function $W'$ on the submanifold $M$ and define an arbitrary continuation $$W=W'+\nu_a U^a$$ of $W'$ to the product space $P=\bar{Q} \times Q\times \Lambda$ where $(\nu_a)$'s are the Lagrange multipliers defining a local coordinate system for $\Lambda$. This $W$ is a complete solution of the implicit Hamiltonian dynamics (\ref{nHJT}) generated by $F$ if
$$ F\left(q^i,\frac{\partial W'}{\partial q^{i}},\lambda\right)=cst, \qquad U^{a}\left( \bar{q}, q\right)=0,\qquad a=1,..,l. $$
An implicit description of a Lagrangian submanifold of $T^*\bar{Q}\times T^*Q $ generated by $W$
or equivalently of the corresponding diffeomorphism $\varphi $ can be computed by
\begin{equation}
\theta_Q \ominus \theta_{\bar{Q}} =p_{i}dq^{i}-\bar{p}_{i}d\bar{q}^{i}=d\left(
W'\left( q,\bar{q}\right) +\nu _{a}U^{a}\left( q,\bar{q}\right)
\right),   \label{tee}
\end{equation}%
where the $\theta_{\bar{Q}}$ and $\theta_Q$ are the canonical one-forms on $\bar{Q}$ and $Q$, respectively. In this case, the momenta $p\in T_q^{*}Q$ and $\bar{p}\in T_{\bar{q}}^{*} \bar{Q}$ can be explicitly stated as \begin{eqnarray}
\bar{p}_{i} &=&-\frac{\partial W'}{\partial \bar{q}^{i}}-\nu _{a}%
\frac{\partial U^{a}}{\partial \bar{q}^{i}}  \notag \\
p_{i} &=&\frac{\partial W'}{\partial q^{i}}+\nu _{a}\frac{%
\partial U^{a}}{\partial q^{i}}  \notag \\
U^{a}\left( \bar{q}, q\right)  &=&0,\qquad a=1,..,l.
\end{eqnarray}%

\section{Application to degenerate Lagrangian systems}

\subsection{Lagrangian dynamics}

A Lagrangian function $L$ is a real valued function on $TQ$. Consider the coordinates $(q^i,\dot{q}^i)$ on $TQ$ induced those from $Q$. We define a vertical endomorphism $S$ given by $S=\partial/\partial \dot{q}^i\otimes dq^i$. Note that, $S$ is a $(1,1)$-tensor field on $Q$. In terms of $S$ the Cartan one-form $\theta_L$ is defined to be $$\theta_L=S^{*}(dL)=\left(\partial L / \partial {\dot q}^i\right)dq^i$$
where $dL$ is the exterior derivative of a Lagrangian density.
The Cartan two-form derivates from the Cartan one-form $\omega_L=-d\theta_L$. See that $\omega_L$ is symplectic if
the Hessian matrix
 \begin{equation}\label{hess}
  \left(W_{ij}\right)=\left(\frac{\partial^2 L}{\partial \dot{q}^i \partial \dot{q}^j}\right)
 \end{equation}
is not singular. In this case, the fiber derivative (or the Legendre transformation)
 \begin{equation}\label{FD}
  \mathbb{F}L:T Q\mapsto T ^*Q:(q^i,\dot{q}^j)\mapsto \left(q^i, \frac{\partial L}{\partial \dot{q}^j}\right)
\end{equation}
becomes a symplectomorphism relating the Cartan two-form $\omega_L$ on $TQ$ and the canonical symplectic two-form $\omega_Q$ on $T^{*}Q$. The Lagrangian is said to be hyperregular if the fiber derivative $\mathbb{F}L$ is a global diffeomorphism.

The energy is defined as
$E_L=\Delta(L)-L$, a real valued function on $TQ$ where the Liouville vector field is $\Delta=\dot{q}^i \partial/\partial \dot{q}^i$.
The Hamiltonian is retrieved through 
\begin{equation} \label{canHam}
H(q,p)=E_L\circ \mathbb{F}L^{-1}.
\end{equation}
If the Lagrangian is regular, or equivalently, if $\omega_L$ is symplectic, then the Lagrange equations can be expressed geometrically as
 \begin{equation} \label{Lvf}
  \iota_{\xi_L}\omega_L=dE_L,
 \end{equation}
 whose solution $\xi_L$ is called a Euler--Lagrange vector field explicitly given by
 \begin{equation}
  \xi_L=\dot{q}^{i}\frac{\partial}{\partial q^i}+\xi^i(q,\dot{q})\frac{\partial}{\partial {\dot q}^i}.
 \end{equation}
 The integral curves $(q^i(t),\dot{q}^i(t))$ are lifts of their projections $(q^i(t))$ on $Q$ and are solutions of the system of differential equations
 \begin{align}
  \frac{dq^i(t)}{dt}&=\dot{q}^i, \qquad \frac{d\dot{q}^i(t)}{dt}=\xi^i,
 \end{align}
 which is equivalent to a second-order differential equation
 \begin{equation}\label{sode}
  \frac{d^2 q^i(t)}{dt^2}=\xi^i.
 \end{equation}
 The curves $(q^i(t))$ in $Q$ are called the solutions of $\xi_L$ that correspond with the solutions of the Euler--Lagrange equation
 \begin{equation} \label{EL}
  \frac{d}{dt}\left(\frac{\partial L}{\partial \dot{q}^i}\right)=\frac{\partial L}{\partial q^i}.
 \end{equation}

If the Lagrangian is regular, then the fiber derivative (\ref{FD}) has the following geometry.
\begin{center}
 \begin{tikzcd}
  &(TTQ,\omega_L) \arrow[r, "T{\mathbb{F}L}"]\arrow[dr, "\tau_{TQ}"]  &(TT^{*}Q,\omega_Q^{T}) \arrow[r, "\alpha_Q"]  &T^{*}TQ  \\
  & &TQ\arrow[ur, "dL"] \arrow[lu,"\xi_L",bend left=20]\arrow[r, "L"] &\mathbb{R}
 \end{tikzcd}
\end{center}
In this case, the Hamiltonian vector field $X_H$ associated with the Hamiltonian function $H$ in (\ref{canHam}) and $\xi_L$ presented in (\ref{Lvf}) are related as $$ X_H \circ \mathbb{F}L =T\mathbb{F}L\circ \xi_L.$$
The diffeomorphisms $\alpha_Q$ and $\beta_Q$ maps Lagrangian submanifolds into Lagrangian submanifolds,
\begin{equation*}
 \alpha_Q(\Ima(X_H))=\Ima(dL), \qquad \beta_Q \circ \alpha^{-1}_Q(\Ima(dL))=\Ima(dH),
\end{equation*}
whereas the Hamiltonian and the Lagrangian vector fields are related to their
corresponding Lagrangian submanifolds as
\begin{equation*}
\beta_Q \circ X_H= dH, \qquad  \alpha_Q\circ T\mathbb{F}L\circ \xi_L=dL,
\end{equation*}
respectively.

\subsection{Lagrangian dynamics as a Lagrangian submanifold}

In this section we depict the geometric interpretation of a HJ theory for Lagrangian dynamics. For it, we present the EL equations in terms of Morse families and special symplectic structures.

Recall the special symplectic structure  on the left side of Tulczyjew's triple.
\begin{equation} \label{LagLag}
 \xymatrix{
& T^*TQ \ar[d]_{\pi_{TQ}}&TT^*Q \ar[l]_{\alpha_Q} \ar[dl]^{\tau_{TQ}}\\ \mathbb{R}& \ar[l]_{L}
T^*Q
}.
\end{equation}
In the induced local picture on $TT^*Q$, by following the procedure presented in subsection (\ref{LS-MF}), we compute the Lagrangian submanifold $E$ generated by the Lagrangian $L$ as
\begin{equation}\label{La}
E=\left \{\left(q^i,\frac{\partial L}{\partial \dot{q}^i};\dot{q}^i,\frac{\partial L}{\partial q^i}\right)\in TT^*Q \right \}
\end{equation}
which is equivalent to the EL equations (\ref{EL}).
We can generate this Lagrangian
submanifold from the right wing (the Hamiltonian
side) of the triple (\ref{T}) by defining a proper Morse family $%
F^{L\rightarrow H}$ on the Pontryagin bundle $PQ=TQ\times_{Q}T^{\ast}Q$ over $T^{\ast}Q$. 
On a local chart, the energy function
\begin{equation}  \label{HMF}
F( q,p,\dot{q}) =p_i\dot{q}^i- L (q,\dot{q})
\end{equation}
satisfies the requirements of being a Morse
family. Hence, $F$ generates a Lagrangian
submanifold of $T^{\ast}T^{\ast}Q
$ as defined in equation (\ref{LagSub}). In coordinates $(q^i,p_i,\alpha_i,\beta^i)$ of $T^{\ast}T^{\ast}Q$, this Lagrangian submanifold is given by

\begin{equation}\label{Morsefam}
\alpha_i=\frac{\partial F}{\partial q^i}=-\frac{\partial L}{\partial q^i},
\quad
\beta^i= \frac{\partial F}{\partial p_i}=\dot{q}^i,
\quad
\frac{\partial F}{\partial \dot{q}^i}=p_i-\frac{\partial L}{\partial \dot{q}^i}=0
\end{equation}
The inverse of the isomorphism $\beta_Q$ maps this Lagrangian submanifold to the Lagrangian submanifold $E$ presented in \eqref{La}. Here, we record the following diagram for this.
 \begin{equation} \label{Morse-Gen-La-Ha}
 \xymatrix{
\mathbb{R}& PQ \ar[d]^{\pi}\ar[l]^{F}& T^*T^*Q \ar[d]_{\pi_{T^*Q}}&TT^*Q \ar[l]_{\beta_Q} \ar[dl]^{\tau_{T^*Q}}\\ &
T^*Q \ar@{=}[r]& T^*Q
}
\end{equation}

 For regular cases, the Morse family $F$ on $PQ
$ can be reduced to a Hamiltonian function $H$ on $T^{\ast}Q$. For
degenerate cases, a reduction of the total space $PQ$ to a subbundle larger
than $T^{\ast}Q$ is possible depending on degeneracy level of Lagrangian
function \cite{Be11}. There exists an intrisecally geometric procedure for dealing with constraints in Hamiltonian and Lagrangian mechanics. It has been available since 1979, with advantages over the Dirac-Bergman algorithm, it is
the Gotay-Bergman algorithm \cite{gotay2,gotay5,gotay3,gotay0} (read Appendix for brief description of the method).

\subsection{Hamilton-Jacobi theory for degenerate Lagrangians}

To write the associated HJ equation of the EL equations generated by (possibly) degenerate Lagrangian densities, we apply lemma \ref{nHJT} to the Lagrangian submanifold presented in (\ref{La}). Accordingly, we arrive at that the implicit HJ equation
$$F(q,\gamma(q),\dot{q})=\gamma_i(q)\dot{q}^i- L (q,\dot{q})=\text{cst}$$
for a closed one-form $\gamma=\gamma_i(q) dq^i$. 
Taking the exterior derivative of this equation, we arrive at the following local picture of the Hamilton-Jacobi equation for a Lagrangian $L$ 
 \begin{equation} \label{DefnG}
   \dot{q}^i\frac{\partial \gamma_i}{\partial q^j}(q)-\frac{\partial L}{\partial q^j}(q,\dot{q})=0,\qquad \gamma_i (q)-\frac{\partial L}{\partial \dot{q}^i}(q,\dot{q})=0.
 \end{equation}
To illustrate this, we propose two particular problems \cite{deLeMaDeDiVa13}.
\begin{example}\normalfont

Consider the degenerate Lagrangian $L$ on $T\mathbb{R}^3 $ given by
$$ L(q,\dot{q})=L(q^1,q^2,q^3,\dot{q}^1,\dot{q}^2,\dot{q}^3)=\frac{1}{2}(\dot{q}^1+\dot{q}^2)^2, $$ and the Whitney bundle $T\mathbb{R}^3\oplus T^*\mathbb{R}^3$ fibered on $T^*\mathbb{R}^3$
and parameterized by $(q^1,q^2,q^3,\dot{q}^1,\dot{q}^2,\dot{q}^3,p_1,p_2,p_3)$. The Lagrange multipliers correspond with $(\lambda^i)= (\dot{q}^1,\dot{q}^2,\dot{q}^3)$.
Following (\ref{HMF}), we define the Morse family
$$ F(q,\dot{q},p)=\dot{q}^1p_1+\dot{q}^2p_2+\dot{q}^3p_3-\frac{1}{2}(\dot{q}^1+\dot{q}^2)^2.$$
that generates a Lagrangian submanifold $E$ of $TT^*\mathbb{R}^3$. Explicitly,
\begin{eqnarray}\label{SEx1}
 E=\{ (q^1,q^2,q^3,p_1,p_2,p_3;\dot{q}^1,\dot{q}^2,\dot{q}^3,0,0,0)\in TT^*\mathbb{R}^3 \notag \\ :p_1=\dot{q}^1+\dot{q}^2, p_2=\dot{q}^1+\dot{q}^2, p_3=0 \}.
\end{eqnarray}
It is evident that the projection of $E$ to $T^*\mathbb{R}^3$ results with the following $4$ dimensional submanifold $$C=\{(q^1,q^2,q^3,p_1,p_2,p_3)\in T^*\mathbb{R}^3:p_1=p_2,p_3=0\}.$$

Consider now the closed one-form $\gamma:\mathbb{R}^3\mapsto T^*\mathbb{R}^3$. According to the Lagrangian HJ theorem (\ref{nHJT}),
the system (\ref{DefnG}) in this particular case takes the form:
\begin{equation}\label{sysEx1}
\begin{cases}
 [1]\quad  \dot{q}^1\frac{\partial \gamma_1}{\partial q^1}+\dot{q}^2\frac{\partial \gamma_2}{\partial q^1}+\dot{q}^3\frac{\partial \gamma_3}{\partial q^1}=0, \\
 [2]\quad  \dot{q}^1\frac{\partial \gamma_1}{\partial q^2}+\dot{q}^2\frac{\partial \gamma_2}{\partial q^2}+\dot{q}^3\frac{\partial \gamma_3}{\partial q^2}=0, \\
  [3]\quad \dot{q}^1\frac{\partial \gamma_1}{\partial q^3}+\dot{q}^2\frac{\partial \gamma_2}{\partial q^3}+\dot{q}^3\frac{\partial \gamma_3}{\partial q^3}=0, \\
  [4]\quad \gamma_1-\dot{q}^1-\dot{q}^2=0 \\
  [5]\quad \gamma_2-\dot{q}^1-\dot{q}^2=0 \\
  [6]\quad \gamma_3=0.
  \end{cases}
\end{equation}
It is immediate to see from equations [4] and [5] that  $\gamma_1=\gamma_2$. If $\gamma$ is closed and $\gamma_3=0$, one obtains that $\gamma_1$ and $\gamma_2$ are independent of $q^3$.
Then [3] in (\ref{sysEx1}) is automatically satisfied. The substitution of [4] into [1] and [2] results in
$$\gamma_1\frac{\partial \gamma_1}{\partial q^1}=0,\qquad \gamma_1\frac{\partial \gamma_1}{\partial q^2}=0.$$
A nontrivial solution is possible if $\gamma_1$ and $\gamma_2$ are constants. Hence, we record the one-form
$$\gamma(q)=(q^1,q^2,q^3,c,c,0).$$

In terms of submanifolds, that is, according to the first condition in theorem \ref{nHJT}, the picture is the following. 
The constant character of the one-form defines a constraint $\dot{q}^1+\dot{q}^2=c$, on the velocity variables. We first restrict the submanifold $E$ to the image space of the $\gamma$, we arrive at 
\begin{equation}
E\vert_{\Ima(\gamma)}=\left \{ (q^1,q^2,q^3,c,c,0;\dot{q}^1,\dot{q}^2,\dot{q}^3,0,0,0)\in TT^*\mathbb{R}^3 \notag \\ :c=\dot{q}^1+\dot{q}^2 \right \}
\end{equation}
whose generic version is given in (\ref{EandF-gamma}). The projection of $E\vert_{\Ima(\gamma)}$ to the tangent bundle $T\mathbb{R}^3$ by $T\pi_{\mathbb{R}^3}$ results in a five dimensional submanifold
$$E^\gamma=\left\{ (q^1,q^2,q^3,\dot{q}^1,\dot{q}^2,\dot{q}^3)\in T\mathbb{R}^3 \quad \text{with} \quad \dot{q}^1+\dot{q}^2=c\right \},$$
of the tangent bundle $T\mathbb{R}^3$. The system of equations (\ref{sysEx1}) is equivalent to saying that the tangent lift of $E^\gamma$ by the tangent mapping $T\gamma$ equals  $E\vert_{\Ima(\gamma)}$. It is indeed immediate to see that $T\gamma (E^\gamma)=E\vert_{\Ima(\gamma)}$.

In terms of vector fields, the situation is as follows. Consider a section $\sigma$ of the tangent bundle $\tau_{T^*Q}$ given by $$\sigma(q^1,q^2,q^3,p_1,p_2,p_3)=(q^1,q^2,q^3,p_1,p_2,p_3;c-\dot{q}^2,\dot{q}^2,\dot{q}^3,0,0,0) $$
on the intersection $C\cap\Ima(\gamma)$. Note that, $\Ima(\sigma)\subset E$. Accordingly, we write the following vector field 
$$X_\sigma = c \frac{\partial}{\partial q^1}+\dot{q}^3\circ \gamma(q) \frac{\partial}{\partial q^3}+\dot{q}^2\circ \gamma(q) \left(\frac{\partial}{\partial q^2}-\frac{\partial}{\partial q^1}\right).$$
We project this vector field by $T\pi_Q$ and arrive at the vector field $X^\gamma_\sigma$, which is locally the same as $X_{\sigma}$. Composing with the section $\gamma$,
$$ T\gamma(q^1,q^2,q^3,\dot{q}^1,\dot{q}^2,\dot{q}^3)=(q^1,q^2,q^3;c,c,0,\dot{q}^1,\dot{q}^2,\dot{q}^3,0,0,0)$$
This shows that
\begin{equation}
T\gamma(X_{\sigma}^{\gamma})=X_{\sigma}
\end{equation}  
is obviously fulfilled.
\end{example}

\begin{example}\normalfont

Consider the Lagrangian $L$ on $T\mathbb{R}^2 $ given by
$$ L(q,\dot{q})=L(q^1,q^2,\dot{q}^1,\dot{q}^2)=\frac{1}{2}(\dot{q}^1)^2+q^2(q^1)^2. $$
To recast the EL system generated by this Lagrangian density, we simply define the following Morse family on the Whitney sum $T\mathbb{R}^2\oplus T^*\mathbb{R}^2$
\begin{equation} \label{Morse-2}
F(q,p,\dot{q})=p_1\dot{q}^1+p_2\dot{q}^2-\frac{1}{2}(\dot{q}^1)^2-q^2(q^1)^2. 
\end{equation}
This family generates the following Lagrangian submanifold
\begin{equation} \label{E-2}
E=\left \{(q^1,q^2,p_1,p_2;\dot{q}^1,\dot{q}^2,2q^2q^1,(q^1)^2) \in TT^*\mathbb{R}^2:p_1=\dot{q}^1, p_2=0 \right \}.
\end{equation}
The projection of $E$ onto the cotangent bundle $T^*\mathbb{R}^2$ by the tangent bundle projection $\tau_{T^*Q}$ results with the following submanifold
\begin{equation}\label{C3}
  C=\{(q^1,q^2,p_1,p_2)\in T^*\mathbb{R}^2:p_2=0\}.
\end{equation}

According to theorem \ref{nHJT}, we now introduce a closed one-form $\gamma$ on $\mathbb{R}^2$ and require that the Morse family $F$ in (\ref{Morse-2}) is constant on the image space, that is 
$$F(q,\gamma(q),\dot{q})=
\gamma_1(q)\dot{q}^1+\gamma_2(q)\dot{q}^2-\frac{1}{2}(\dot{q}^1)^2-q^2(q^1)^2=cst.$$
For this case, the Hamilton Jacobi equations (\ref{DefnG}) turn out to be the following set
\begin{equation}\label{DefnEx3}
\begin{cases}
 [1]\quad \dot{q}^1\frac{\partial \gamma_1(q)}{\partial q^1}+\dot{q}^2\frac{\partial \gamma_2(q)}{\partial q^1}-2q^2q^1=0 \\
 [2]\quad \dot{q}^1\frac{\partial \gamma_1(q)}{\partial q^2}+\dot{q}^2\frac{\partial \gamma_2(q)}{\partial q^2}-(q^1)^2=0 \\
 [3]\quad \gamma_1(q)-\dot{q}^1=0 \\
 [4]\quad \gamma_2(q)=0.
 \end{cases}
\end{equation}
The closedness of $\gamma$, together with equation [4], imply that $\gamma_1$ depends only on $q^1$. If we substitute this and equations [3] and [4] in [1] and [2], then we arrive at a partial differential equation and a constraint
\begin{eqnarray}\label{}
\gamma_1(q)\frac{\partial \gamma_1(q)}{\partial q^1}-2q^2q^1=0, \qquad q^1=0 \notag.
\end{eqnarray}
Note that the constraint $q^1=0$ implies that the system is automatically satisfied for any function $\gamma_1=\gamma_1(q^1)$. 
Then, the one-form is described by
\begin{equation} \label{gamma-2}
\gamma(q)=(q^1,q^2;q^1,0).
\end{equation}

Let us assume that $\gamma_1(q^1)= q^1$
as it is done in \cite{deLeMaDeDiVa13}, and rewrite system (\ref{DefnEx3}) in terms of submanifolds. If the Lagrangian submanifold $E$ in (\ref{E-2}) is restricted to the image space of $\gamma$ in (\ref{gamma-2}), then the result becomes 
 \begin{equation}
E\vert_{\Ima(\gamma)}=\left \{(q^1,q^2,q^1,0;q^1,\dot{q}^2,2q^2q^1,(q^1)^2) \in TT^*\mathbb{R}^2: {q}^1=0\right \}.
\end{equation}
This is projected to the tangent bundle $T\mathbb{R}^2$ by $T\pi_{\mathbb{R}^2}$ in order to get the reduced dynamics 
$$E^\gamma=\{(q^1,q^2;\dot{q}^1,\dot{q}^2):q^1=0,\dot{q}^1=q^1=0\}.$$
The $\gamma$-relatedness of $E$ and $E^\gamma$ can be checked with the following lift 
$$T\gamma \circ E^\gamma=(q^1=0,q^2,q^1,0;\dot{q}^1=0,\dot{q}^2,\dot{q}^1=0,0).$$
As a result, we have three constraints $q^1=0,p_1=0$ and $p_2=0$. So, the projected submanifold $C$ in \eqref{C3}
must be rectified as
$$ C=\{(q^1,q^2,p_1,p_2):q^1=0,p_1=0,p_2=0\}\subset T^*\mathbb{R}^2.$$
Here, $p_2=0$ is called the primary constraint, since it roots in the functional structure of the Lagrangian function, and the other two constraints $q^1=0$ and $p_1=0$ are called the secondary constraints.

Consider a section $\sigma$ of the tangent bundle fibration $\tau_{T^*\mathbb{R}^2}$ given by
$$ \sigma(q^1,q^2;\dot{q}^1,\dot{q}^2)=(q^1,q^2,\dot{q}^1,\dot{q}^2;\dot{q}^1,\dot{q}^2,2q^2q^1,(q^1)^2).$$
By restricting this section to the intersection $\Ima (\gamma)\cap C$, we arrive at the following vector field $$ X_\sigma=\dot{q}^2\circ \gamma(q) \frac{\partial}{\partial q^2}.$$
The projection of this vector field to $T\mathbb{R}^2$ is the vector field  $X_\sigma^\gamma$, and it looks exactly as $X_\sigma$, at least locally.  
$X_\sigma^\gamma$ and $X_\sigma$ are $\gamma$ related since the tangent lift by $\gamma$ is given by
$$T\gamma(q,\dot{q})=(q^1,q^2,q^1,0;\dot{q}^1,\dot{q}^2,\dot{q}^1,0)$$
and maps the vector field $X_\sigma^\gamma$ into $X_\sigma$.

\end{example}

\section*{Conclusions and future work}

In this work we have presented a geometric Hamilton-Jacobi theory for systems of implicit differential equations. In the general context, due to the implicit character of the equations, the lack of a vector field has been solved by the introduction of a local section $\sigma$. In the particular case of the implicit Hamiltonian dynamics, Morse families and special symplectic structures have been employed 
to derive a Hamilton--Jacobi theory in which the Morse function plays the role of the Hamiltonian.
This result has been particulary applied to singular Lagrangians. We expect further applications of the theory in constraint Hamiltonian systems, Dirac systems, and etc. 

The obtainance of a Hamilton--Jacobi theorem through reduction
 is here sketched in terms of coisotropic reduction \cite{LeMaVa17}. As  future work, we aim at stating the problem of reduction of the implicit Hamilton-Jacobi theory under the Lie group symmetry of the implicit system of differential equations.  

\section*{Acknowledgements}
This work has been partially supported by MINECO MTM 2013-42-870-P and
the ICMAT Severo Ochoa project SEV-2011-0087. One of us (OE) is grateful for ICMAT for warm hospitality where some parts of the works has been done. (OE) is also grateful Prof. Hasan G\"{u}mral for valuable discussions in the theory of Tulczyjew triples. 

\newpage

\section*{Appendix}

\subsection*{The Gotay--Nester algorithm}

The Gotay-Nester algorithm is a suitable tool for reducing the dynamics of  singular Hamiltonian or Lagrangian systems to a reduced manifold where the motion is well-defined. This algorithm was created as a generalization of the well-known Dirac-Bergman alrorithm which has local nature and does not cope with all the singularities appearing in dynamics. Let us briefly describe the Gotay-Nester algorithm \cite{gotay2,gotay5,gotay3,gotay0}.

Let us recall that given a singular Lagrangian $L:TQ\rightarrow \mathbb{R}$, the Legendre transform $FL:TQ\rightarrow T^{*}Q$
and the energy $E_L:TQ\rightarrow \mathbb{R}$, one can define a presymplectic system $(M_1=FL(TQ),\omega_1)$, 
where $\omega_1$ is the restriction of the canonical symplectic form on $T^*Q$ to $M_1$.
We will assume that $L$ is almost regular (the Legendre transformation is a submersion and surjective), then $M_1= FL(TQ)$ is a submanifold of $T^*Q$

The restriction of the Legendre mapping $FL_1 : TQ \longrightarrow M_1$  to this submanifold is a submersion with connected fibers. In this case, $M_1$ is called the submanifold of primary constraints. If $L$ is almost regular, $\ker( TFL)=\ker( \omega_L)\cap V(TQ)$, where
$V(TQ)$ denotes the vertical bundle, and the fibers are connected, a direct computation shows that $E_L$ projects onto a function $h_1:M_1 \rightarrow \mathbb{R}.$ The inclusion of this submanifold is denoted by $j_1 : M_1 \longrightarrow T^*Q$
and define $\omega_1 = j_1^* (\omega_Q)$. The dynamics in the primary constraint manifold is
\[
i_X\omega_1=dh_1,
\]
where $h_1\in C^{\infty}(M_1)$ is the projection of the energy $E_L\in C^{\infty}(TQ)$.
Now, there are two possibilities: the solution $X$ defined at all the points of $M_1$ is such that $X$ defines global dynamics and it is a solution (modulo
$\ker \, \omega_1$), in other words, there are only primary
constraints. Or, we are in need of a second submanifold $M_2$ where $\iota_X\omega_1=dh_1$
and $X\in TM_1$. But such a solution $X$ is
not necessarily tangent to $M_2$, so we have to impose an additional tangency
condition to $M_2$ and obtain a new submanifold $M_3$ along which there
exists a solution. Continuing this process, we obtain a sequence of
submanifolds
\[
\cdots M_k \hookrightarrow \cdots \hookrightarrow M_2 \hookrightarrow M_ 1
\hookrightarrow T^*Q
\]
where the general description of $M_{l+1}$ is
\[
M_{l+1}:=\{p\in M_{l} \textrm{ such that there exists } X_p\in T_pM_l \\ 
\textrm{ satisfying } i_X\omega_1=dh_1 \}.
\]
If the algorithm stabilizes at some $k$, say $M_{k+1}=M_k$, then we
say that $M_k$ is the final constraint submanifold which is denoted by $M_f$, and then there
exists a well-defined solution $X$ along $M_f$. This constraint algorithm produces a solution $X$ of the equation
$$
(i_X \, \omega_1 = dh_1)_{|M_f} \; ,
$$
where $X$ is a vector field on $M_f$. We can depict the situation in the following diagram:
\[
\xymatrix{
TQ \ar@{->}[r]^{FL} \ar[dr]_{FL_1} & T^*Q \ar[dr]^{\pi _Q} &   \\
  &  M_1 \ar@{_{(}->}[u] \ar[r]^{\pi _1} & Q \\
  &  M_2 \ar@{_{(}->}[u] \ar[r]^{\pi _2 }&\ar@{_{(}->}[u] Q_2  \\
  &      \ar@{_{(}->}[u] & \ar@{_{(}->}[u]  \\
  &      \ar@{.}[u] &\ar@{.}[u]   \\
  &  M_f \ar@{_{(}->}[u] \ar[r]^{\pi _f } & \ar@{_{(}->}[u] Q_f
      }
\]

%
%
%
%
%
%
%

\newpage

\end{document}